\documentclass[sigconf,nonacm]{acmart}
\usepackage{popets}
\usepackage{enumitem}
\usepackage{tabularx}

\setcopyright{none}
\theoremstyle{acmdefinition}

\begin{document}

\title[Picture the Epsilon]{Picture the Epsilon: Pursuing Identity-Level Privacy Guarantees for Images}

\author{Arman Zareian Jahromi}
\email{zareian@ksu.edu}
\affiliation{%
  \institution{Kansas State University}
  \city{Manhattan}
  \state{Kansas}
  \country{United States}}

\author{Vishnu Bondalakunta}
\email{vishnub@ksu.edu}
\affiliation{%
  \institution{Kansas State University}
  \city{Manhattan}
  \state{Kansas}
  \country{United States}}

\author{Mohammad Akbar Bin Shah}
\email{mohammadakbar@ksu.edu}
\affiliation{%
  \institution{Kansas State University}
  \city{Manhattan}
  \state{Kansas}
  \country{United States}}

\author{Naimul Haque}
\email{naimul.haque@lsu.edu}
\affiliation{%
  \institution{Louisiana State University}
  \city{Baton Rouge}
  \state{Louisiana}
  \country{United States}}

\author{Shuangqing Wei}
\email{swei@lsu.edu}
\affiliation{%
  \institution{Louisiana State University}
  \city{Baton Rouge}
  \state{Louisiana}
  \country{United States}}

\author{George T. Amariucai}
\email{amariucai@ksu.edu}
\affiliation{%
  \institution{Kansas State University}
  \city{Manhattan}
  \state{Kansas}
  \country{United States}}

\renewcommand{\shortauthors}{Zareian Jahromi et al.}

\begin{abstract}
Several methods for auditing privacy in embedding spaces report a number called
``epsilon.'' The common name is misleading: one number may be a heuristic
score, another may come from a valid population inequality but ignore sampling
uncertainty, and a third may be a confidence bound. This paper asks when such a
number is evidence about differential privacy. We study four approaches based
on Gaussian calibration, marginal kernel-density ratios, maximum mean
discrepancy (MMD), and classifier hypothesis tests. The first two are modeling
diagnostics. The MMD and classifier approaches use valid population lower
bounds, but only the classifier approach is given a separate test set and a
finite-sample confidence calculation. To see how these distinctions matter, we
build a synthetic benchmark in which the true privacy value is known. It
includes pure-DP Laplace mechanisms, approximate-DP Gaussian mechanisms, an
exact privacy null, two identity geometries, and three sample sizes. At the
null, the Gaussian and kernel-density diagnostics remain large. A direct
conversion of an empirical ROC curve often returns infinity even though the
AUC is near chance and no threshold separates the samples perfectly. The MMD
value increases as the distributions become easier to distinguish, but the
sample estimate is small relative to the known reference and is not a lower
confidence bound. By contrast, a classifier chosen on development data and evaluated
on untouched test data gives simultaneous lower confidence bounds under the
stated iid model. We also apply the methods to FaceFusion and InstantID. That
case study shows how the methods behave on face data but does not calibrate
them because
the generators have no known privacy value. The main lesson is that an
epsilon-like number is meaningful only together with its assumptions and its
finite-sample interpretation.
\end{abstract}

\keywords{differential privacy, privacy auditing, face recognition, maximum
mean discrepancy, hypothesis testing, image-to-image generation}

\maketitle

\section{Introduction}
\label{sec:intro}

Face generators can alter an image substantially while leaving enough identity
information for a recognition model to recover the person. This makes visual
inspection a poor privacy test. An auditor instead compares the distributions
of identity embeddings produced from two source identities.

The difficulty is not computing a measure of separation. It is deciding what
that measure says about differential privacy (DP). Many embedding-space
procedures return a scalar denoted by $\varepsilon$, but the notation can hide
important differences. The value may depend on a Gaussian model, on an
independence approximation, or on a classifier and threshold chosen from the
same data used for evaluation. In these cases, a large value may indicate that
two samples differ, yet it need not be a statistically valid lower bound on the
privacy parameter. Related problems appear in feature-space privacy mechanisms
for face images~\cite{xue2023dpimage,wen2021identitydp} and in empirical
unlinkability evaluations for biometric templates~\cite{gomezbarrero2018unlinkability}.

We study four ways to turn embedding separation into an epsilon-like number.
\textsc{GaussMech} compares the distance between two sample means with their
within-identity variation. \textsc{KDE-LR} fits each embedding coordinate
separately and adds the largest marginal log ratios. We introduce these two as
diagnostics, not as established DP auditors. \textsc{MMD-TV} uses MMD and total
variation to obtain a population lower bound for pure DP. \textsc{ROC-HT} uses
the hypothesis-testing characterization of DP and a learned classifier. These
four approaches examine the same data from different angles: mean separation,
marginal density differences, joint distribution differences, and learned
classification. Their mathematical ingredients are standard
~\cite{dwork2014algorithmic,kairouz2015composition,gretton2012kernel,
sriperumbudur2010hilbert}, but their reported values do not have the same
statistical meaning.

Our question is simple: \emph{when does a finite-sample identity audit provide
evidence about $(\varepsilon,\delta)$-DP?} FaceFusion~\cite{facefusion} and
InstantID~\cite{wang2024instantid} cannot answer that question by themselves.
Both systems are designed to transfer or preserve identity, and neither claims
the identity-level DP guarantee studied here. More importantly, their true
privacy values are unknown. They are useful examples, but they cannot tell us
whether an audit is correctly calibrated.

We therefore begin with mechanisms whose privacy is known exactly. A finite
prototype set maps 400 identity labels to synthetic 512-dimensional vectors.
Laplace noise is calibrated using the exact global L1 sensitivity, and Gaussian
noise is calibrated using the analytic Gaussian profile and the exact global L2
sensitivity. A separate null mechanism gives every identity the same output
distribution. We repeat the experiment at three sample sizes and with two
prototype arrangements. Because the benchmark contains no faces and its pairwise
privacy values are known before the audits run, we can compare each reported
number with the quantity it is supposed to describe.

The benchmark exposes three problems. First, \textsc{GaussMech} and
\textsc{KDE-LR} report large values even when the two distributions are exactly
identical. Second, a direct conversion of an empirical ROC curve often returns
$+\infty$ at the null. This happens because a finite ROC curve can contain a
zero denominator even when its AUC is near $0.5$ and no threshold achieves
perfect separation. Third, the sample version of \textsc{MMD-TV} generally
increases with distinguishability, but it remains small relative to the known
privacy value and can exceed the truth in some low-privacy-loss settings. It is
therefore a useful descriptive statistic here, not a lower confidence bound.

The classifier method can be made statistically valid by separating selection
from evaluation. We choose the identity pair, classifier, orientation, and
threshold using development data. We then evaluate that fixed decision rule on
untouched test data and place simultaneous one-sided confidence bounds on its
two error rates. This produces a lower confidence bound on $\varepsilon$ under
the stated iid sampling model. The bound is valid by construction, although it
is often zero when the test set is small or the classifier is weak.

The synthetic benchmark does not show that a real image generator is private.
Its role is narrower: it reveals estimation artifacts and tests whether each
statistical interpretation holds when the answer is known. A real audit must
define the released channel, specify which inputs are adjacent, choose all
decision rules before looking at test data, and evaluate them on samples from
that same channel. Our FaceFusion and InstantID experiments illustrate this
process at the embedding level, subject to their stated data and provenance
limitations.

Previous DP auditors search for counterexamples, learn distinguishers, or
reduce the number of mechanism calls~\cite{ding2018statdp,bichsel2021dpsniper,
nasr2023tight,kong2024dpauditorium,lokna2023groupattack,
gonzalez2025sequential,xiang2025bits}. We do not propose a replacement auditor
or a new confidence theorem. Instead, we use known-privacy controls to show how
four common styles of analysis behave, then apply the same distinctions to face
generators.

\textbf{Contributions.}
\nopagebreak
\begin{itemize}[topsep=0pt,partopsep=0pt]
\item We distinguish three kinds of audit output: modeling diagnostics,
sample estimates inserted into valid population bounds, and finite-sample confidence
bounds for an identity-indexed channel.
\item We build a benchmark with known pairwise privacy values. It covers four
mechanism families, two synthetic identity geometries, three sample sizes, and
settings from an exact null to almost perfect distinguishability.
\item We identify specific finite-sample failures: the $\sqrt{d/n}$ null scale
of \textsc{GaussMech}, large null values from \textsc{KDE-LR}, small and
uncertified \textsc{MMD-TV} sample values, and infinite empirical ROC
conversions caused by endpoints rather than perfect separation.
\item We adapt a development/test split for simultaneous identity-pair bounds,
following classifier-based DP auditors~\cite{bichsel2021dpsniper,
kong2024dpauditorium}. We measure how often those bounds are informative and
apply the procedure to FaceFusion and InstantID only as a conditional
retrospective case study.
\end{itemize}

\section{What Should an Audit Report?}
\label{sec:problem}

\noindent\textbf{Three kinds of answers.}
Suppose we have samples from the outputs produced by two identities. An audit
can answer three different questions. A \emph{diagnostic} asks whether the
samples look different under a particular model. A \emph{population bound}
relates the true output distributions to DP, but replacing the unknown
distributions with samples does not automatically preserve that bound. A
\emph{finite-sample certificate} includes the sampling uncertainty and gives a
lower confidence bound on $\varepsilon$ for a stated threat model. These answers
should not all be reported as estimates of the same quantity. Table~\ref{tab:status}
shows which answer each method provides in this paper.

\begin{table*}[t]
\centering
\small
\caption{What each method reports. A ``No'' in the last column means that the
reported sample value is not a confidence bound on $\varepsilon$. The reserved-test
result is a confidence bound only under the iid model stated in
Section~\ref{sec:background}.}
\label{tab:status}
\begin{tabularx}{\textwidth}{@{}l>{\raggedright\arraybackslash}X
>{\raggedright\arraybackslash}X>{\raggedright\arraybackslash}Xl@{}}
\toprule
Method & What it measures & Assumption or requirement & Relation to DP &
Finite-sample confidence bound?\\
\midrule
\textsc{GaussMech} & mean-shift/residual score &
additive isotropic Gaussian model & only with known sensitivity/noise & No\\
\textsc{KDE-LR} & sum of marginal log ratios &
product/conditional factorization & not for the joint release here & No\\
\textsc{MMD-TV} & MMD-to-TV pure-DP value &
pure DP; $k(x,x)\leq1$ & valid population lower-bound formula & No\\
\textsc{ROC-HT} (OOF) & empirical ROC/AUC &
learned cross-validated test & valid for a fixed population test & No\\
\textsc{ROC-HT} (reserved test) & confidence-adjusted ROC bound &
fixed test; iid untouched data & valid hypothesis-testing route & Yes\\
\bottomrule
\end{tabularx}
\end{table*}

\begin{figure*}[t]
\centering
\includegraphics[width=0.9\textwidth]{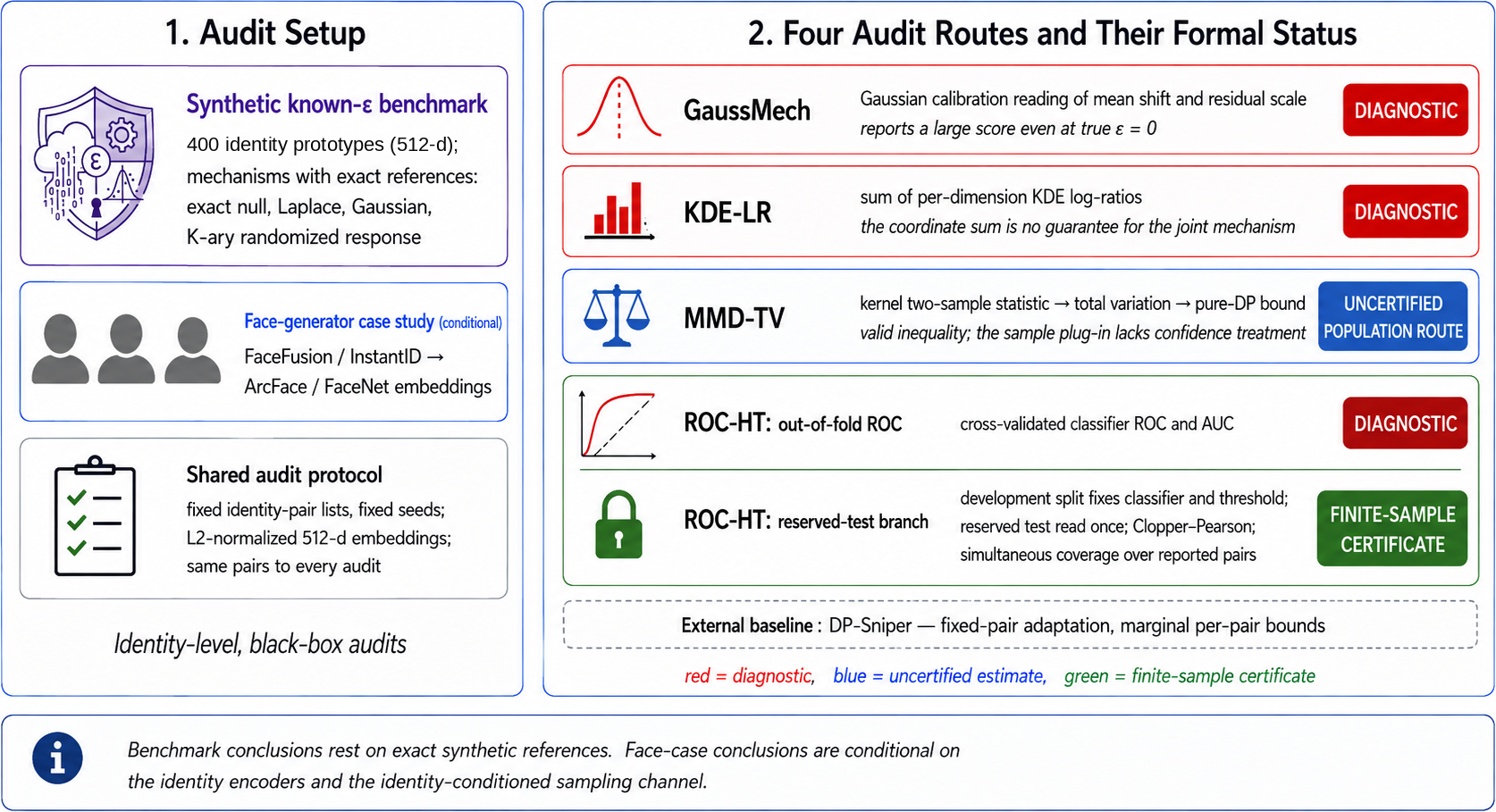}
\caption{The four audit routes use the same identity pairs but have different
formal status. \textsc{GaussMech} and \textsc{KDE-LR} are diagnostics.
\textsc{MMD-TV} inserts an uncertified sample estimate into a valid population
bound, and out-of-fold \textsc{ROC-HT} is descriptive. Only the separate ROC
test, chosen on development data and evaluated on reserved data, is a
finite-sample certificate.}
\Description{A two-panel overview. The left panel shows the audit setup: a
synthetic known-epsilon benchmark, a face-generator experiment with stated
assumptions,
and a shared audit protocol with fixed pairs and seeds. The right panel lists
the four audit routes and states what conclusion each result supports:
GaussMech and KDE-LR are diagnostic scores, MMD-TV is an uncertified
population route, and only the ROC-HT test on unused data gives a finite-sample
certificate. DP-Sniper appears as a separate comparison.}
\label{fig:pipeline}
\end{figure*}

\noindent\textbf{How we judge the methods.}
We ask four questions. Does the method return zero when the two population
distributions are identical? Does its output change when the known pairwise
$\varepsilon$ changes? When a method gives a confidence bound, how often is that
bound greater than zero? Finally, does the construction have the advertised
simultaneous confidence level? The third question measures usefulness, while
the fourth measures validity. A bound can be valid yet usually equal to zero.
We never rank methods by comparing raw numbers from different rows of
Table~\ref{tab:status}.

Every reported value keeps one of the three status labels in
Table~\ref{tab:status}. For a diagnostic or an uncertified population route,
we compare its behavior with matched mechanisms of known privacy, including an
exact null. For a certificate, we separately report how well the classifier
learns the distinguishing rule, the fraction of predeclared pairs whose lower
bound is positive, and the loss of power caused by the simultaneous confidence
correction. This separation prevents a large diagnostic score from being
mistaken for a certified privacy lower bound.

\noindent\textbf{How to test a privacy claim.}
Consider a proposed single-release guarantee $(\varepsilon_0,\delta_0)$. The
operator first specifies the identity populations, adjacency relation, and
confidence level. Development data may be used to search for identity pairs,
train a classifier, and choose its orientation and threshold. After these
choices are fixed, the test is run once on reserved data. If the simultaneous
lower bound $\varepsilon_L$ exceeds $\varepsilon_0$, the proposed guarantee is
false under the stated sampling model. Otherwise, the audit is inconclusive;
it has not proved the mechanism private and cannot by itself approve
deployment. A claim about repeated releases also needs a composition analysis.

\section{Background and Threat Model}
\label{sec:background}

\subsection{Differential Privacy}
\label{sec:bg-dp}

A randomized mechanism $\mathcal{M}: \mathcal{D} \to \mathcal{Y}$ is
$(\varepsilon, \delta)$-differentially private with respect to an adjacency
relation $\sim$ if, for every adjacent pair $D_0 \sim D_1$ and every measurable
$S \subseteq \mathcal{Y}$,
\begin{equation}
\Pr[\mathcal{M}(D_0) \in S] \leq e^{\varepsilon}\, \Pr[\mathcal{M}(D_1) \in S] + \delta. \label{eq:dp-def}
\end{equation}
Smaller $\varepsilon$ means stronger privacy. Setting $\delta=0$ gives pure
$\varepsilon$-DP~\cite{dwork2006calibrating}. In this paper, the input domain
$\mathcal{D}$ is a finite set of identity labels. Every two different labels
are adjacent, so the inequality must hold for every identity pair. Adjacency
here has nothing to do with visual similarity. Given an identity label,
$\mathcal{M}$ draws an image associated with that identity and releases one
transformed output (Section~\ref{sec:bg-threat}).

The same requirement can be expressed as a limit on classification accuracy.
Suppose a test tries to decide whether an output came from
$\mathcal{M}(D_0)$ or $\mathcal{M}(D_1)$. Let $\mathrm{FPR}$ be its false-positive
rate and $1-\mathrm{TPR}$ its false-negative rate. The mechanism is
$(\varepsilon,\delta)$-DP if and only if every test and every adjacent pair
satisfy
\begin{align}
\mathrm{FPR} + e^{\varepsilon}(1 - \mathrm{TPR}) &\geq 1 - \delta, \notag\\
(1 - \mathrm{TPR}) + e^{\varepsilon}\mathrm{FPR} &\geq 1 - \delta.
\label{eq:hyptest-dp}
\end{align}
Kairouz et al.~\cite{kairouz2015composition} establish this characterization,
and Dong et al.~\cite{dong2022gaussian} express it through $f$-DP. Our
\textsc{ROC-HT} method uses these inequalities directly.

\subsection{Total Variation and Differential Privacy}
\label{sec:bg-tv}

The MMD method needs a bridge from distributional distance to DP. Total
variation provides that bridge. Pure DP limits how far apart two adjacent
output distributions can be in total variation. The converse is not true: a
small total-variation distance alone does not prove DP.

For an adjacent pair $D_0\sim D_1$, let
$P_b=\mathrm{Law}(\mathcal{M}(D_b))$ for $b\in\{0,1\}$. A test predicts $D_1$
when the output falls in a set $A$. Its false-positive and false-negative rates
are $\beta_{\mathrm{I}}(A)=P_0(A)$ and
$\beta_{\mathrm{II}}(A)=1-P_1(A)$. The smallest possible sum of these errors
determines total variation:
\begin{equation}
d_{\mathrm{TV}}(P_0, P_1) = 1 - \inf_A \bigl[\beta_{\mathrm{I}}(A) + \beta_{\mathrm{II}}(A)\bigr]. \label{eq:tv-via-errors}
\end{equation}
Ghazi and Issa~\cite{ghazi2024totalvariation} (Section II, Definition 4 and
Theorem 1), building on Kairouz, Oh, and
Viswanath~\cite{kairouz2015composition}, show that every
$(\varepsilon,\delta)$-DP mechanism satisfies
$\beta_{\mathrm{I}}(A)+\beta_{\mathrm{II}}(A)
\geq 2(1-\delta)/(1+e^{\varepsilon})$ for every test. Combining this result
with Equation~\ref{eq:tv-via-errors} gives
\begin{equation}
d_{\mathrm{TV}}(\mathcal{M}(D_0), \mathcal{M}(D_1)) \;\leq\; \frac{e^{\varepsilon} - 1 + 2\delta}{e^{\varepsilon} + 1}. \label{eq:tv-dp}
\end{equation}
For pure DP, $\delta=0$, this becomes
$d_{\mathrm{TV}}\leq\tanh(\varepsilon/2)$. Solving for $\varepsilon$ gives
\begin{equation}
\varepsilon \;\geq\; \ln\!\left(\frac{1 + d_{\mathrm{TV}}}{1 - d_{\mathrm{TV}}}\right). \label{eq:tv-to-eps}
\end{equation}
This is the last step used by \textsc{MMD-TV}; Theorem~\ref{thm:mmdtv}
connects MMD to total variation.

\subsection{Face Identity Encoders and Generators}
\label{sec:bg-face}

A face encoder maps an image $x$ to a vector $\phi(x)\in\mathbb{R}^d$ that is
intended to represent identity. We use two sets of precomputed 512-dimensional
embeddings. One was produced by ArcFace through InsightFace
~\cite{deng2019arcface,insightface}. The other was produced by the
InceptionResnet-V1 model in \texttt{facenet-pytorch}~\cite{esler2019facenet}.
The exact checkpoint identifiers were not recorded,
so we do not infer checkpoint variants or training sets from the arrays. We
L2-normalize the face embeddings before analysis. The synthetic benchmark is
different: its prototypes begin on the unit sphere, but we do not normalize
the outputs after adding noise. In both settings, the encoder is a measurement
tool rather than ground truth. Any encoder error affects all four audits
(Section~\ref{sec:limitations}).

We study two generators. \textbf{FaceFusion}~\cite{facefusion} swaps a donor
face into a target image. We hold the donor fixed in the original eight
conditions, so each output label refers to the target identity.
\textbf{InstantID}~\cite{wang2024instantid} generates an image from a reference
identity, and each label refers to that reference identity. The official
InstantID pipeline uses InsightFace's \texttt{antelopev2} model
~\cite{instantidrepo}. If our precomputed outputs used that default, an
ArcFace/InsightFace audit would partly reuse the representation that conditioned
the generator. The exact configuration was not recorded. We therefore also
report FaceNet results, which provide a cleaner comparison across encoder
families. Neither generator is treated as a privacy mechanism.

\subsection{Maximum Mean Discrepancy}
\label{sec:bg-mmd}

MMD measures whether two distributions differ after mapping their samples into
a reproducing kernel Hilbert space. Let
$k:\mathcal{X}\times\mathcal{X}\to\mathbb{R}$ be a positive-definite kernel
with RKHS $\mathcal{F}_k$. For distributions $P_0$ and $P_1$, the maximum mean
discrepancy~\cite{gretton2012kernel} is
\begin{equation}
\mathrm{MMD}(P_0, P_1; \mathcal{F}_k)
= \sup_{\substack{f \in \mathcal{F}_k\\ \|f\|_{\mathcal{F}_k} \leq 1}}
\bigl|\mathbb{E}_{P_0}[f] - \mathbb{E}_{P_1}[f]\bigr|. \label{eq:mmd-def}
\end{equation}
For a characteristic kernel, MMD is zero exactly when the distributions are
equal. The RBF kernel
$k(u,v)=\exp(-\|u-v\|^2/(2\sigma^2))$ has this property for every
$\sigma>0$. Given samples $\{x_i\}_{i=1}^n\sim P_0$ and
$\{y_j\}_{j=1}^m\sim P_1$, we estimate squared MMD by
\begin{align}
\widehat{\mathrm{MMD}}^2_u
&= \frac{1}{n(n-1)} \sum_{i \neq i'} k(x_i, x_{i'}) \notag\\
&\quad + \frac{1}{m(m-1)} \sum_{j \neq j'} k(y_j, y_{j'}) \notag\\
&\quad - \frac{2}{nm} \sum_{i, j} k(x_i, y_j). \label{eq:mmd-unbiased}
\end{align}
This estimator is unbiased for the population $\mathrm{MMD}^2$ and supports a
permutation test of $H_0:P_0=P_1$~\cite{gretton2012kernel}. Later we substitute
a sample MMD into a bound that is valid for the population MMD. That
substitution produces a point estimate. It becomes a lower confidence bound
only if sampling uncertainty is handled separately.

\subsection{Threat Model}
\label{sec:bg-threat}

\noindent\textbf{One release.}
The curator is given one of two candidate identities, chosen by a hidden bit
$b$. It draws an image $X$ from that identity, transforms the image, and
releases $Y$. The auditor knows the procedure and the two candidates but does
not observe $X$ or $b$. It encodes $Y$ with the fixed face encoder $\phi$ from
Section~\ref{sec:bg-face} and tries to predict $b$. This is the privacy claim
that we audit. FaceFusion and InstantID themselves do not make this claim.

\noindent\textbf{What is random?}
Let $P_{\mathrm{id}}$ denote the distribution of images labeled with identity
$\mathrm{id}$. The variable $R$ collects any randomness inside the generator.
If a generator is deterministic once its inputs are fixed, then $R$ is
constant. For InstantID, $X\sim P_{\mathrm{id}}$ is the reference image and
$Y=g_{\mathrm{ID}}(X;R)$. For FaceFusion, $X\sim P_{\mathrm{id}}$ is the target
image, one donor $d_0$ is used for every identity, and
$Y=g_{\mathrm{FF}}(X;d_0,R)$. This gives two channels: the released image and
the embedding that we actually analyze.
\begin{equation}
\begin{aligned}
\mathcal{M}_{\mathrm{img}}(\mathrm{id})
&=\mathrm{Law}\!\left(Y\mid X\sim P_{\mathrm{id}},\,R\sim P_R\right),\\
\mathcal{M}_\phi(\mathrm{id})
&=\mathrm{Law}\!\left(\phi(Y)\mid X\sim P_{\mathrm{id}},\,R\sim P_R\right).
\end{aligned}
\label{eq:distributional-mechanism}
\end{equation}
In the five-pass cascade experiment, FaceFusion is deterministic after the
source and donor images have been selected; we do not vary a generator seed.
The records for the original eight conditions do not show whether generator
randomness varied. In every condition, however, drawing $X$ from an identity's
image collection is part of the mechanism. Each CelebA identity contributes
26--35 usable outputs, and each VGGFace2 identity contributes 90--100.

This model describes one release drawn from the image population for an
identity. It does not describe a user who chooses one fixed photograph. That
choice would condition on a particular $X$ instead of averaging over
$X\sim P_{\mathrm{id}}$. The setup is an identity-indexed instance of
distribution privacy~\cite{chen2023distributionprivacy} and has a similar
organization of secrets and distributions to Pufferfish
~\cite{kifer2014pufferfish}. The term \emph{identity-indexed} names the protected
input; Definition~\ref{def:id-dp} uses the usual pairwise DP inequality.

For InstantID, the private input is the reference identity. For FaceFusion, it
is the target identity. Because the same donor $d_0$ is used for both classes
in every pairwise test, a classifier cannot distinguish the classes merely by
recognizing different donors. After averaging over $X$ and $R$,
$\mathcal{M}_\phi$ is simply a finite channel from an identity label to a
distribution of embeddings.

This is different from user-level DP in model training, where neighboring
datasets differ in all records contributed by one person and the released
object is a trained model~\cite{xu2023userlevel}. It also differs from methods
that perturb one particular image or representation before release
~\cite{xue2023dpimage,wen2021identitydp,shibata2024dpddpm}. Mathematically, our
finite all-pairs channel also has the form of local DP. We use different
terminology because a curator draws an image from an identity-conditioned
population before releasing the output. We are not proposing a new definition
of DP.

\noindent\textbf{What an embedding audit can establish.}
Any two distinct identities are adjacent. Since $\phi$ is deterministic, DP
for the generated images would imply the same DP guarantee for their
embeddings. Therefore, a statistically certified violation in embedding space
also refutes the proposed guarantee for the images. The reverse does not hold:
failure to find a violation after encoding does not prove that the images are
private. There is also a connection to per-image claims. If every release were
$(\varepsilon_0,\delta_0)$-DP under replacement of one fixed image, joint
convexity of hockey-stick divergence implies the same bound after averaging
over two identity-conditioned image distributions. Thus a certified violation
of the identity-indexed channel would also refute that per-image claim
~\cite{kawamoto2019local}.

\begin{definition}[Identity-indexed DP]
\label{def:id-dp}
The distributional channel $\mathcal{M}_\phi$ is
\emph{identity-indexed $(\varepsilon, \delta)$-DP} if, for every ordered pair of
distinct identities $(\mathrm{id}_A, \mathrm{id}_B)$ and every measurable
$S \subseteq \mathbb{R}^d$,
\begin{equation*}
\mathcal{M}_\phi(\mathrm{id}_A)(S)
\leq e^{\varepsilon}\mathcal{M}_\phi(\mathrm{id}_B)(S)+\delta.
\end{equation*}
\end{definition}
Equivalently, every classifier that receives one embedding must obey the error
tradeoff in Equation~\ref{eq:hyptest-dp}. DP does not require the two identities
to be indistinguishable. It limits how small the two error rates can be at the
same time. The stored outputs provide repeated samples for estimating this
single-release channel. We make no group-privacy claim for releasing the whole
batch.

\section{Four Audit Methods}
\label{sec:methods}

\subsection{\textsc{GaussMech}: Gaussian Signal-to-Noise Score}
\label{sec:gaussmech}

\textsc{GaussMech} compares the distance between two identities' sample means
with the variation observed within each identity. A larger distance and less
within-identity variation produce a larger score. The calculation comes from
the usual Gaussian-mechanism formula, but that formula assumes additive,
input-independent isotropic Gaussian noise and a known global sensitivity. It
is also commonly stated for $\varepsilon\in(0,1)$
~\cite{dwork2014algorithmic}. Our face generators do not meet these conditions,
so we use the result as a signal-to-noise score rather than an estimate of DP
$\varepsilon$.

\noindent\textbf{Calculation.} For identity $\mathrm{id}$, let
$z_{\mathrm{id},i}\in\mathbb{R}^d$ denote its $i$th normalized generated embedding, and define
\begin{align*}
\widehat{\mu}_{\mathrm{id}}&=\frac{1}{n_{\mathrm{id}}}
\sum_{i=1}^{n_{\mathrm{id}}}z_{\mathrm{id},i},\\
\widehat{\sigma}_{\mathrm{id}}&=
\sqrt{\frac{1}{n_{\mathrm{id}}d}\sum_{i=1}^{n_{\mathrm{id}}}
\|z_{\mathrm{id},i}-\widehat{\mu}_{\mathrm{id}}\|_2^2},\\
\widehat{\Delta}_2(A,B)&=\|\widehat{\mu}_A-\widehat{\mu}_B\|_2.
\end{align*}
Here $\widehat{\sigma}_{\mathrm{id}}$ is the root-mean-square within-identity
residual per embedding coordinate. The textbook sufficient Gaussian calibration
$\sigma\geq\Delta_2\sqrt{2\ln(1.25/\delta)}/\varepsilon$
can be algebraically inverted~\cite{dwork2014algorithmic}. Replacing its known
sensitivity and noise scale with the observed separation and residual scales
gives
\begin{equation}
\widehat{\varepsilon}_{\textsc{GaussMech}}(A, B) \;:=\; \frac{\widehat{\Delta}_2(A, B)\,\sqrt{2 \ln(1.25/\delta)}}{\max(\widehat{\sigma}_A, \widehat{\sigma}_B)}. \label{eq:gaussmech}
\end{equation}
Once the identity means and scales have been computed, each pair costs $O(d)$.
We use the larger residual scale, which gives the smaller of the two possible
scores, and report the median across pairs. This choice does not establish the
DP inequality in either direction.

The model is also a poor fit to the face embeddings. After L2 normalization,
the coordinate residuals are concentrated but not Gaussian
(Figure~\ref{fig:qqplots}, Appendix~\ref{app:robustness}). A Shapiro--Wilk test
~\cite{shapiro1965analysis} rejects Gaussianity at $p<10^{-3}$ for 509 of 512
coordinates in the ArcFace+VGGFace2+InstantID condition and for a majority of
coordinates in every face condition. One residual scale also hides differences
in variance across coordinates. Finally, the formula treats estimated means
and variances as exact. It does not separate mechanism variation from sampling
error, and it can miss leakage that lies in a direction with high overall
variation.

We set $\delta=10^{-5}$ and use the root-mean-square residual per coordinate as
the scale. Because this is not a valid inversion of the Gaussian mechanism for
these generators, that value of $\delta$ should not be compared with the
pure-DP formulas used by \textsc{MMD-TV} and \textsc{ROC-HT}.

\subsection{\textsc{KDE-LR}: Coordinate-Wise Density-Ratio Score}
\label{sec:kdelr}

\textsc{KDE-LR} examines each embedding coordinate separately. For every
coordinate, it estimates the two identities' densities and finds the largest
log-density ratio on a finite grid. It then adds these coordinate scores. This
addition resembles the basic DP composition rule, but the coordinates of one
embedding are not separate mechanisms. The final sum is therefore a
description of the coordinate marginals, not a DP bound for the full vector.

\noindent\textbf{Calculation.} For identity $\mathrm{id}\in\{A,B\}$ and
coordinate $d'$, let $\hat{p}^{(d')}_{\mathrm{id}}$ estimate the marginal law
of $\phi(g(X))_{d'}$ conditional on $X\sim\mathrm{id}$. We use a Gaussian
kernel with Silverman's rule-of-thumb bandwidth~\cite{silverman1986density}.
For smoothing weight $\alpha\in[0,1]$, define
\begin{equation}
\hat{p}^{(d')}_{\mathrm{id}, \alpha}(x) \;:=\; (1 - \alpha)\,\hat{p}^{(d')}_{\mathrm{id}}(x) + \alpha\,u^{(d')}(x), \label{eq:kdelr-smooth}
\end{equation}
where $u^{(d')}$ is uniform over a finite interval containing both identities' samples plus a three-bandwidth margin. Let $\mathcal{G}^{(d')}_{A,B}$ be the 201-point grid on that interval, let $\tau=10^{-12}$, and write $\tilde p=\max\{\hat p,\tau\}$. The implemented diagnostic is
\begin{align}
\widehat{\varepsilon}^{(d')}_\alpha(A, B) &:=
\max_{x\in\mathcal{G}^{(d')}_{A,B}}
\left|\log\tilde p^{(d')}_{A,\alpha}(x)
-\log\tilde p^{(d')}_{B,\alpha}(x)\right|, \notag\\
\widehat{\varepsilon}_{\textsc{KDE-LR}}(A, B) &:=
\sum_{d'=1}^{d}\widehat{\varepsilon}^{(d')}_\alpha(A, B). \label{eq:kdelr}
\end{align}
We fit separate Silverman bandwidths for $A$ and $B$; the larger one is used
only to extend the grid. We report the median pair score for $\alpha=0$ and
$\alpha=10^{-2}$.

Unlike \textsc{GaussMech}, this method can detect differences in marginal shape
as well as differences in means. It can therefore respond to multimodal or
heavy-tailed coordinates. The price is that it discards dependence between
coordinates. Two joint distributions can have similar marginals but different
dependence, or small differences can accumulate across 512 coordinates. This
is why the reported sums reach the thousands in Table~\ref{tab:comparative};
they should not be read as joint $\varepsilon$ values.

The numerical choices also matter. Increasing $\alpha$ from $0$ to $10^{-2}$
reduces the score by a factor of $4.8$ to $6.4$ across the eight face
conditions, and the method itself does not choose between these settings. The
finite grid and density floor avoid divergent ratios but directly affect the
large values at $\alpha=0$. Our implementation uses Gaussian kernels,
identity-specific Silverman bandwidths, a 201-point grid extending three
bandwidths beyond the samples, and a density floor of $\tau=10^{-12}$.

\subsection{\textsc{MMD-TV}: A Population Lower Bound}
\label{sec:mmdtv}

\textsc{MMD-TV} starts from a simple idea: if the output distributions for two
identities are far apart, then a small value of $\varepsilon$ cannot describe
both of them. MMD measures part of this separation. We first use the known
inequality $\mathrm{MMD}\leq2d_{\mathrm{TV}}$ for bounded-diagonal kernels
~\cite{sriperumbudur2010hilbert}, then convert total variation to a lower bound
on pure-DP $\varepsilon$~\cite{kairouz2015composition,
ghazi2024totalvariation}. The result is valid for population distributions.
The value computed from finite samples is only an estimate unless it also
accounts for sampling error.

\noindent\textbf{Population formula.} Let $P_A$ and $P_B$ denote the
distributions of $\mathcal{M}_\phi$ under identities $A$ and $B$. For a
positive-definite kernel $k$ satisfying $k(x,x)\leq1$ for every $x$, the
population \textsc{MMD-TV} bound is
\begin{equation}
\varepsilon_{\textsc{MMD-TV}} \;:=\; \ln\!\left(\frac{1 + \mathrm{MMD}(P_A, P_B)/2}{1 - \mathrm{MMD}(P_A, P_B)/2}\right). \label{eq:mmdtv}
\end{equation}
The right-hand side is defined as $+\infty$ when $\mathrm{MMD}=2$. Its population justification follows from Lemma~\ref{lem:mmdtv} and Theorem~\ref{thm:mmdtv}; empirical results later evaluate the expression using a sample MMD estimate without a one-sided confidence correction.
This substitution is often called a \emph{plug-in estimate}: the unknown
population MMD is replaced by its value estimated from the observed samples.

\begin{lemma}[$\mathrm{MMD} \leq 2\,d_{\mathrm{TV}}$ for bounded-diagonal kernels]
\label{lem:mmdtv}
Let $k: \mathcal{X} \times \mathcal{X} \to \mathbb{R}$ be a positive-definite kernel with $k(x, x) \leq 1$ for all $x \in \mathcal{X}$. For any two distributions $P_0$, $P_1$ on $\mathcal{X}$,
\begin{equation}
\mathrm{MMD}(P_0, P_1; \mathcal{F}_k) \;\leq\; 2\,d_{\mathrm{TV}}(P_0, P_1). \label{eq:mmd-le-tv}
\end{equation}
\end{lemma}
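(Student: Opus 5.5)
The plan is to bound every witness function in the unit ball of $\mathcal{F}_k$ uniformly and then control the difference of its expectations under $P_0$ and $P_1$ by total variation. The reproducing property gives, for any $f\in\mathcal{F}_k$ with $\|f\|_{\mathcal{F}_k}\le 1$ and any $x\in\mathcal{X}$,
\[
|f(x)| = |\langle f, k(\cdot,x)\rangle_{\mathcal{F}_k}| \le \|f\|_{\mathcal{F}_k}\,\|k(\cdot,x)\|_{\mathcal{F}_k} = \|f\|_{\mathcal{F}_k}\sqrt{k(x,x)} \le 1 .
\]
Hence every $f$ in the unit ball satisfies $\sup_x|f(x)|\le 1$. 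This is where the hypothesis $k(x,x)\le 1$ is used. For the RBF kernel, $k(x,x)=1$ exactly.

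Next I use the standard variational fact that for a measurable $g$ with $\sup|g|\le 1$,
\[
|\mathbb{E}_{P_0}[g]-\mathbb{E}_{P_1}[g]| \le 2\,d_{\mathrm{TV}}(P_0,P_1).
\]
To prove it, let $\nu$ dominate both measures, for example $\nu=P_0+P_1$, and write $p_b=dP_b/d\nu$. Then
\[
\Bigl|\int g\,(p_0-p_1)\,d\nu\Bigr| \le \int |p_0-p_1|\,d\nu = 2\,d_{\mathrm{TV}}(P_0,P_1).
\]
The last equality ties this to the characterization in Equation~\ref{eq:tv-via-errors}. The infimum there is attained at $A=\{p_1>p_0\}$, which gives $1-\inf_A[\beta_{\mathrm{I}}(A)+\beta_{\mathrm{II}}(A)] = P_1(A)-P_0(A) = \tfrac12\int|p_0-p_1|\,d\nu$. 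The paper defines $d_{\mathrm{TV}}$ through the error tradeoff, so I would spell out this one-line identification.

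Combining the two steps and taking the supremum over the unit ball in Equation~\ref{eq:mmd-def} yields Equation~\ref{eq:mmd-le-tv}.

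The main obstacle is a measurability and integrability point: RKHS functions must be measurable and integrable under $P_0$ and $P_1$ for $\mathbb{E}_{P_b}[f]$ to make sense. I would assume, as is implicit in the definition of MMD, that $k$ is measurable. Then each $f\in\mathcal{F}_k$ is measurable, since it is a pointwise limit of finite kernel combinations. Boundedness by $1$ from the first step then gives integrability, so no further conditions are needed.
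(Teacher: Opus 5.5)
Your proposal is correct and matches the paper's proof. Both use the reproducing property and Cauchy--Schwarz to place the unit ball of $\mathcal{F}_k$ inside the $L^\infty$ unit ball, then bound the supremum over $\|f\|_\infty\le 1$ by $2\,d_{\mathrm{TV}}$. You additionally prove that duality step via densities, check it against the error-tradeoff definition of $d_{\mathrm{TV}}$, and handle measurability; the paper leaves these points implicit by citing Sriperumbudur et al.
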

\begin{proof}
For any $f \in \mathcal{F}_k$ with $\|f\|_{\mathcal{F}_k} \leq 1$, the reproducing property and Cauchy--Schwarz give $|f(x)| = |\langle f, k(\cdot, x) \rangle_{\mathcal{F}_k}| \leq \|f\|_{\mathcal{F}_k} \sqrt{k(x, x)} \leq 1$. Therefore the unit ball of $\mathcal{F}_k$ is contained in the unit ball of $L^\infty$, and
\begin{align*}
\mathrm{MMD}(P_0,P_1;\mathcal{F}_k)
&\leq \sup_{\|f\|_\infty\leq1}
\bigl|\mathbb{E}_{P_0}f-\mathbb{E}_{P_1}f\bigr|\\
&=2\,d_{\mathrm{TV}}(P_0,P_1).
\end{align*}
This is a specialization of the MMD-vs.-IPM comparison in~\cite{sriperumbudur2010hilbert}.
\end{proof}

\begin{theorem}[MMD-TV analytical lower bound on $\varepsilon$]
\label{thm:mmdtv}
Let $\mathcal{M}$ be any pure $\varepsilon$-DP mechanism on an adjacent pair $D_0\sim D_1$. Let $P_b=\mathrm{Law}(\mathcal{M}(D_b))$ for $b\in\{0,1\}$, and let $k$ be a positive-definite kernel with $k(x, x) \leq 1$. Then
\begin{equation}
\varepsilon \;\geq\; \ln\!\left(\frac{1 + \mathrm{MMD}(P_0, P_1)/2}{1 - \mathrm{MMD}(P_0, P_1)/2}\right). \label{eq:mmdtv-thm}
\end{equation}
The bound is valid at the population level for any pure $\varepsilon$-DP
mechanism. Equality in the MMD-to-TV step would require the RKHS witness to
represent a TV-optimal sign function for the signed measure $P_0-P_1$, which is
generally unavailable. We restrict the \textnormal{\textsc{MMD-TV}} statement
to pure DP rather than using an approximate-DP slack term.
\end{theorem}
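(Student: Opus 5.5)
The plan is to chain Lemma~\ref{lem:mmdtv} with the pure-DP total-variation bound of Equation~\ref{eq:tv-dp} and then invert a monotone function. Write $m=\mathrm{MMD}(P_0,P_1)$ and $t=d_{\mathrm{TV}}(P_0,P_1)$. Since $k(x,x)\leq 1$, Lemma~\ref{lem:mmdtv} gives $m\leq 2t$, so $m/2\leq t\leq 1$. Because $\mathcal{M}$ is pure $\varepsilon$-DP on the pair $D_0\sim D_1$, the Ghazi--Issa/Kairouz et al.\ inequality, combined with Equation~\ref{eq:tv-via-errors} and taking $\delta=0$ in Equation~\ref{eq:tv-dp}, gives $t\leq \tanh(\varepsilon/2)=(e^{\varepsilon}-1)/(e^{\varepsilon}+1)$. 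Chaining the two inequalities yields $m/2\leq \tanh(\varepsilon/2)<1$ for finite $\varepsilon$.

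Next I will invert. The map $u\mapsto \ln\bigl((1+u)/(1-u)\bigr)=2\,\mathrm{artanh}(u)$ is strictly increasing on $[0,1)$. Since $0\leq m/2\leq\tanh(\varepsilon/2)$, applying it gives
\[
\ln\!\left(\frac{1+m/2}{1-m/2}\right)\leq 2\,\mathrm{artanh}\bigl(\tanh(\varepsilon/2)\bigr)=\varepsilon ,
\]
which is Equation~\ref{eq:mmdtv-thm}. The same monotonicity also gives Equation~\ref{eq:tv-to-eps} as an intermediate step, so the result could equally be stated as $\varepsilon\geq\ln((1+t)/(1-t))\geq\ln((1+m/2)/(1-m/2))$.

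The only delicate point is the boundary case $m=2$, where the right-hand side is defined as $+\infty$. In that case $t=1$, and no finite $\varepsilon$ satisfies $1\leq\tanh(\varepsilon/2)$. So no pure-DP mechanism with finite $\varepsilon$ can have $m=2$, and the inequality holds vacuously, or trivially in the extended reals. I will also note that $m\geq 0$ by definition (it is a supremum of absolute values), so the argument of the logarithm is at least $1$.

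The remarks in the statement about equality and the restriction to pure DP are commentary rather than claims needing proof. I would justify them briefly: equality in Lemma~\ref{lem:mmdtv} requires the supremum over the RKHS unit ball to be attained near the sign function $\mathbf{1}_A-\mathbf{1}_{A^c}$ for the Hahn set $A$ of $P_0-P_1$, and a continuous RKHS function generally cannot achieve that. I do not expect any real obstacle. The main thing to check is that the cited TV inequality is applied pairwise to the specific adjacent pair, which is exactly how it is stated.
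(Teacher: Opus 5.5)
Your proposal is correct and matches the paper's proof: both combine Lemma~\ref{lem:mmdtv} with the pure-DP bound $d_{\mathrm{TV}}\leq\tanh(\varepsilon/2)$ and invert using the monotone map $u\mapsto\ln((1+u)/(1-u))$. The paper inverts the TV bound first and then substitutes $\mathrm{MMD}/2\leq d_{\mathrm{TV}}$, whereas you chain the inequalities before inverting, which is the same argument; your explicit treatment of the boundary case $\mathrm{MMD}=2$ is a small addition that the paper leaves implicit.
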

\begin{proof}
The pure-DP TV bound~\cite{kairouz2015composition,ghazi2024totalvariation} states that any pure $\varepsilon$-DP mechanism satisfies $d_{\mathrm{TV}}(P_0,P_1) \leq \tanh(\varepsilon/2) = (e^\varepsilon - 1)/(e^\varepsilon + 1)$. Inverting the bijection $t \mapsto \tanh(t/2)$ from $[0, \infty)$ to $[0, 1)$ gives $\varepsilon \geq 2\,\mathrm{arctanh}(d_{\mathrm{TV}}) = \ln((1 + d_{\mathrm{TV}})/(1 - d_{\mathrm{TV}}))$. By Lemma~\ref{lem:mmdtv}, $d_{\mathrm{TV}}(P_0,P_1) \geq \mathrm{MMD}(P_0,P_1)/2$. The map $t \mapsto \ln((1+t)/(1-t))$ is monotone increasing on $[0, 1)$, so substituting $\mathrm{MMD}(P_0,P_1)/2$ for $d_{\mathrm{TV}}(P_0,P_1)$ can only decrease the right-hand side, preserving the lower bound.
\end{proof}

\noindent\textbf{Relation to an existing tighter bound.} Gonz\'alez et al.
~\cite{gonzalez2025sequential}, improving Kong et al.
~\cite{kong2024dpauditorium}, show for $0\leq k(x,y)\leq1$ that
\begin{equation*}
\mathrm{MMD}(P_0,P_1)
\leq\sqrt{2}\left(1-\frac{2(1-\delta)}{1+e^\varepsilon}\right).
\end{equation*}
At $\delta=0$, this inverts to
$\varepsilon\geq2\,\mathrm{arctanh}(\mathrm{MMD}/\sqrt{2})$, which is tighter
than Equation~\ref{eq:mmdtv-thm} for the nonnegative RBF kernel used here. Our
derivation needs only positive definiteness and $k(x,x)\leq1$, and it shows
where the loss in the bound enters. We do not claim that it improves the
earlier result.

\noindent\textbf{What we compute from samples.} We use an RBF kernel whose
median-heuristic bandwidth is selected separately from each pair's pooled
samples. The implementation computes
\begin{equation*}
\widehat{\mathrm{MMD}}
=\sqrt{\max\{\widehat{\mathrm{MMD}}_u^2,0\}},
\end{equation*}
substitutes it into Equation~\ref{eq:mmdtv-thm}, and reports the largest value
among 500 audited pairs. Because the bandwidth is data-dependent and the
unbiased squared statistic is clipped and transformed, the complete statistic
is not an unbiased estimator of a fixed population MMD. A condition with 399
identities has 79,401 candidate pairs, while one with 400 has 79,800. The
reported maximum is neither the worst case over all pairs nor a lower
confidence bound for the population.

The population result does not assume Gaussian noise or independent
coordinates, and MMD compares the full embedding rather than one coordinate at
a time. Its
main weakness is numerical looseness. The classifier-based empirical TV score
is 4.4--7.5 times larger than the MMD/2 estimate on average. Because the
classifier may also fall short of total variation, this gap cannot be assigned
entirely to Lemma~\ref{lem:mmdtv}. Every face-data value is below one, so even a
confidence-corrected version at these magnitudes would not rule out
$\varepsilon=1$.

There is also a gap between the theorem and the reported statistic. Even with
a fixed bandwidth, the unbiased estimator of squared MMD can exceed its
population value. We additionally choose the bandwidth from the same data,
clip negative estimates to zero, and take a square root. The final number is
therefore a point estimate, not a high-probability lower bound. We use an RBF
kernel, the pairwise median bandwidth, and 500 pairs sampled without
replacement. Testing more pairs can find a more difficult pair, but it does
not improve the estimate for any one pair.

\subsection{\textsc{ROC-HT}: A Classifier-Based Privacy Test}
\label{sec:rocht}

\textsc{ROC-HT} asks how well a classifier can tell two identities apart from
one released embedding. Differential privacy limits the combinations of
false-positive and false-negative rates that any such classifier can achieve
~\cite{wasserman2010statistical,kairouz2015composition,dong2022gaussian}. If a
fixed classifier makes both errors rarely enough, those error rates give a
lower bound on $\varepsilon$. We use this idea in two ways. Cross-validated ROC
curves provide a broad descriptive comparison. A separate experiment chooses
the classifier and threshold on development data and evaluates them on unused
test data; only the latter includes a finite-sample confidence guarantee.

\noindent\textbf{Calculation.} For each identity pair $(A,B)$, fit a
cross-validated classifier $h_{A,B}:\mathbb{R}^d\to[0,1]$ to labeled embeddings
from $A$ and $B$. Concatenate the out-of-fold scores and form the per-pair ROC
$\{(\widehat{\mathrm{FPR}}_i,\widehat{\mathrm{TPR}}_i)\}_i$. Applying the
Kairouz--Oh--Viswanath inequalities at each operating point gives
\begin{equation}
\widehat{\varepsilon}_{\textsc{ROC-HT},\delta}(A,B)
:=\max_i\left\{0,\,
\ln\frac{1-\delta-\widehat{\mathrm{FPR}}_i}
{1-\widehat{\mathrm{TPR}}_i},\,
\ln\frac{\widehat{\mathrm{TPR}}_i-\delta}
{\widehat{\mathrm{FPR}}_i}\right\}.
\label{eq:rocht}
\end{equation}
A term with a nonpositive numerator is omitted; a positive numerator over a
zero denominator gives $+\infty$, while $0/0$ is omitted. We use $\delta=0$
for the Laplace benchmark and face-generator case study and $\delta=10^{-5}$
for the Gaussian benchmark. We use 5-fold stratified cross-validation with a
fixed random seed and record the per-pair empirical value of
Equation~\ref{eq:rocht} over the observed out-of-fold operating points. Because
the classifier and maximizing threshold are data-dependent, these values are
descriptive rather than confidence bounds.

This method does not assume a parametric output distribution, and a better
classifier can reveal a larger lower bound. Cross-validation reduces the
optimism caused by evaluating a classifier on its training examples, but it
does not remove all reuse of the data: five different classifiers produce the
pooled scores, and the best threshold is chosen from those same scores. We
therefore reserve the word ``certificate'' for the separate test in
Section~\ref{sec:heldout-cert}.

The direct empirical conversion also has an important endpoint problem. It is
$+\infty$ when the observed ROC contains a point with
$\widehat{\mathrm{TPR}}=1$ and $\widehat{\mathrm{FPR}}<1-\delta$, or with
$\widehat{\mathrm{FPR}}=0$ and $\widehat{\mathrm{TPR}}>\delta$. For
$\delta=0$, a finite sample can produce either endpoint even when no threshold
perfectly separates the two classes. We therefore report AUC and separately
count the exact corner
$(\widehat{\mathrm{FPR}},\widehat{\mathrm{TPR}})=(0,1)$. More samples may
reveal overlap and remove an endpoint that appears in a smaller sample.

The result also depends on the classifier. We use logistic regression and
report a random-forest comparison in Appendix~\ref{app:robustness}; neither is
treated as uniquely correct. We use five stratified folds. The available data
provide about 100 samples per identity for VGGFace2 and about 30 for CelebA.

\section{Comparative Audit of Face Generators}
\label{sec:experiments}

We next apply the four methods to FaceFusion and InstantID only as a
conditional retrospective case study. Because these generators have no known
identity-level privacy value, this experiment cannot calibrate or rank the
methods. It shows what each method reports on face data under the stated
sampling assumptions. The available records also omit some details of the
original generation runs; Appendix~\ref{app:limitations} lists the fields a
fully documented rerun would require. Section~\ref{sec:exp-main} keeps these
limitations separate from the observed distinguishability results.

\subsection{Setup}
\label{sec:exp-setup}

\noindent\textbf{Datasets.} We use selected identity subsets from
VGGFace2~\cite{cao2018vggface2} and CelebA~\cite{liu2015celeba}. Seven
conditions retain 400 identities; one condition, using FaceNet, CelebA, and
FaceFusion, retains 399 after filtering. VGGFace2 identities have 90--100
usable generated embeddings and CelebA identities have 26--35.

\noindent\textbf{Generated images.} Our input consists of eight CSV files,
one for each encoder, dataset, and generator combination. Each row contains an
identity label and the embedding of a generated image. In the FaceFusion
experiments~\cite{facefusion}, the same donor is used for all target
identities, so the label identifies the target. In the InstantID experiments
~\cite{wang2024instantid}, the label identifies the reference person. The
saved files do not record the exact software revisions, prompts, random seeds,
checkpoint hashes, or FaceFusion donor identifier. In the tables, FF and ID
mean FaceFusion and InstantID; AF and FN mean ArcFace and FaceNet; CB and VGG
mean CelebA and VGGFace2.

\noindent\textbf{Face encoders.} One set of files contains ArcFace embeddings
produced through InsightFace; the other contains embeddings from
\texttt{facenet-pytorch}. The exact checkpoint names were not saved. We remove
a row if its generated embedding is missing or has
norm at most $10^{-6}$, then L2-normalize the remaining vectors. The two
encoders can lose different rows during this step, so their results are not
based on exactly the same images. They instead show whether the main finding
persists with a different representation. This comparison is especially
useful for InstantID, whose official configuration uses InsightFace
\texttt{antelopev2}~\cite{instantidrepo}. FaceNet avoids direct reuse of that
encoder family. However, both public face encoders were trained on large
web-scraped datasets whose identities may overlap the evaluation data or each
other. Agreement between them is therefore a robustness check, not two
statistically independent audits.

\noindent\textbf{Identity pairs.} The 399- and 400-identity conditions contain
79,401 and 79,800 possible unordered pairs. A fixed seed shuffles each list.
All four methods use the first 100 pairs, which makes the medians in
Table~\ref{tab:comparative} directly comparable as summaries of the same data.
For additional checks, \textsc{GaussMech} and \textsc{MMD-TV} use pairs
101--500, and \textsc{ROC-HT} uses pairs 101--200. Appendix
~\ref{app:route-config} gives the numerical settings.

\noindent\textbf{Saved records.} The audit script reads the saved embedding
files; it does not regenerate images or recompute embeddings. The artifact
records file hashes, row and identity counts, the software environment, and the
shared identity-pair lists. It also states which generation details are
unavailable.

\subsection{Results on Eight Face Conditions}
\label{sec:exp-main}

Table~\ref{tab:comparative} summarizes the same first 100 identity pairs for
each condition. The columns should not be compared as four estimates of one
quantity: \textsc{GaussMech} and \textsc{KDE-LR} are diagnostic scores,
\textsc{MMD-TV} is a sample estimate inserted into a population formula, and
the out-of-fold \textsc{ROC-HT} value is descriptive. Table
~\ref{tab:heldout-cert} gives the separate ROC confidence bounds obtained from
unused test data under an iid row model.

\begin{table*}[t]
\caption{Results on L2-normalized face embeddings. Finite values are medians
over the same 100 identity pairs. The \textsc{ROC-HT} column counts empirical
out-of-fold estimates equal to $+\infty$ because of a zero denominator; it does
not claim that the population value is infinite. Each method uses the
$\delta$ stated in Section~\ref{sec:methods}.
$^{\dagger}$A separate fixed-donor control on original VGGFace2 images,
without face restoration, whose generation settings were recorded. It uses
its own identity-pair list and is included only as a diagnostic condition.}
\label{tab:comparative}
\centering
\scriptsize
\setlength{\tabcolsep}{3.0pt}
\begin{tabular}{@{}lllrrrrrr@{}}
\toprule
Encoder & Dataset & Gen. & \shortstack{\textsc{GaussMech}\\diagnostic} & \shortstack{\textsc{KDE-LR} diagnostic\\$\alpha{=}0$} & \shortstack{\textsc{KDE-LR} diagnostic\\$\alpha{=}10^{-2}$} & \shortstack{\textsc{MMD-TV}\\sample est.} & \shortstack{\textsc{ROC-HT} OOF\\$\widehat{\varepsilon}=+\infty$} & \shortstack{ROC\\AUC} \\
\midrule
ArcFace & VGGFace2 & FF & $93.1$ & $10{,}447$ & $1{,}657$ & $0.475$ & $100/100$ & $1.000$ \\
ArcFace & VGGFace2 & ID & $46.1$ & $8{,}560$  & $1{,}334$ & $0.236$ & $100/100$ & $0.982$ \\
ArcFace & CelebA   & FF & $81.6$ & $8{,}271$  & $1{,}689$ & $0.388$ & $100/100$ & $1.000$ \\
ArcFace & CelebA   & ID & $58.7$ & $7{,}249$  & $1{,}523$ & $0.257$ & $100/100$ & $0.978$ \\
\midrule
FaceNet & VGGFace2 & FF & $91.8$ & $9{,}976$ & $1{,}652$ & $0.459$ & $100/100$ & $0.999$ \\
FaceNet & VGGFace2 & ID & $44.2$ & $7{,}872$ & $1{,}278$ & $0.217$ & $100/100$ & $0.937$ \\
FaceNet & CelebA   & FF & $80.3$ & $8{,}659$ & $1{,}721$ & $0.381$ & $100/100$ & $0.992$ \\
FaceNet & CelebA   & ID & $54.0$ & $7{,}190$ & $1{,}502$ & $0.243$ & $100/100$ & $0.921$ \\
\addlinespace[2pt]
ArcFace & VGG2-orig$^{\dagger}$ & FF & $107.9$ & $10{,}500$ & $1{,}717$ & $0.525$ & $100/100$ & $1.000$ \\
\bottomrule
\end{tabular}
\end{table*}

The face identities are readily distinguishable. Across the eight conditions,
the median AUC ranges from 0.921 to 1.000. Median \textsc{MMD-TV} ranges from
0.217 to 0.475, and the separate VGGFace2 condition reaches 0.525 on its own
pair list. For every condition, the direct ROC conversion is $+\infty$ for all
200 tested pairs because each finite ROC curve contains a zero-denominator
endpoint. Yet the exact $(0,1)$ corner occurs for only 0 to 130 of those 200
pairs, or 147 in the separate VGGFace2 condition. Thus the infinite value is
caused by the conversion at an endpoint, not by evidence of perfect population
separation. The largest \textsc{MMD-TV} estimate among 500 pairs ranges from
0.406 to 0.719, and is 0.724 for the separate condition.
Figure~\ref{fig:mmdroc} in Appendix~\ref{app:robustness} shows the MMD
distributions and one out-of-fold ROC curve for each condition.

We also repeat FaceFusion five times in succession. Each pass uses a newly
sampled donor, and donors are sampled in the same way for every target
identity. Observed classifier separation weakens after each pass, but the
reserved test still finds leakage after the fifth pass. All 50 identity pairs
chosen on development data have positive lower bounds on the unused test data;
46 bounds exceed one, and the median is $1.701$ at $\delta=10^{-5}$ with 95\%
simultaneous confidence. On a separate set of 200 pairs, the direct ROC
conversion remains $+\infty$ for every pair. These lower bounds do not show
that the true $\varepsilon$ decreased, or by how much. Tables
~\ref{tab:random-cascade-cert} and~\ref{tab:random-cascade-diag} give the full
results.

\subsection{Testing Fixed Classifiers on Unused Face Data}
\label{sec:heldout-cert}

The out-of-fold results above do not account for all sampling uncertainty. To
obtain confidence bounds, we make every choice before examining the test data.
For each identity, we assign 70\% of the samples to development and keep 30\%
unused for testing. Using development data alone, we score the first 100 pairs,
select the five most distinguishable pairs without reusing an identity, and
choose a classifier orientation and threshold for each pair. We then fit one
logistic-regression classifier per selected pair on all of its development
data. Finally, each fixed classifier and threshold is evaluated once on the
unused samples. Avoiding repeated identities keeps a few people from supplying
most of the evidence, although the confidence calculation does not require
that restriction. This is the usual train-then-test pattern used by
classifier-based auditors such as DP-Sniper and DP-Auditorium
~\cite{bichsel2021dpsniper,kong2024dpauditorium}; we do not claim a new
confidence theorem.

For each pair, let $U_{\mathrm{FPR}}$ and $U_{\mathrm{FNR}}$ be one-sided
Clopper--Pearson upper confidence bounds~\cite{clopper1934confidence} for the
two test error rates. There are two error rates for each of five pairs in each
of eight conditions, so we use
$\alpha=0.05/(8\cdot5\cdot2)=0.000625$ for each rate. Under the iid row model,
a union bound makes all 40 pairwise results correct together with probability
at least 95\%; it does not require the pairs or conditions to be independent.
Substituting the two error-rate limits into Equation~\ref{eq:hyptest-dp} gives
\begin{equation}
\varepsilon_{\mathrm{LB}}(\delta)
=\max\!\left\{0,\,
\ln\frac{1-\delta-U_{\mathrm{FNR}}}{U_{\mathrm{FPR}}},\,
\ln\frac{1-\delta-U_{\mathrm{FPR}}}{U_{\mathrm{FNR}}}
\right\},
\label{eq:heldout-eps}
\end{equation}
where a term is omitted if its numerator is nonpositive. Choosing pairs and
thresholds on the development data does not affect this calculation because
the test data remain unused until the final evaluation.

If the saved rows are iid samples from each $P_{\mathrm{id}}$, the 40 bounds
in Table~\ref{tab:heldout-cert} hold together with at least 95\% confidence.
Under that assumption, all 20 VGGFace2 bounds are positive and 15 exceed one.
FaceNet+VGGFace2+InstantID gives the weakest bounds, $0.685$--$0.925$. Because
the original generation records are incomplete, we cannot verify the iid
assumption. The numbers are therefore binomial confidence bounds under that
assumption, not unconditional guarantees about the generators.

CelebA shows the effect of a small test set. Eleven of 20 selected pairs make
no test error, yet with only 9--10 test samples per class and the 40-pair
correction, the
error upper bound is at least $0.5218$ (and $0.5595$ where a class has 9) and Equation~\ref{eq:heldout-eps} is zero. A balanced
zero-error test gives $\varepsilon_{\mathrm{LB}}=0$, $0.454$, and $1.277$ at
$n=10,15,30$ per class. These adaptively selected pairs are stress tests, not a
prevalence estimate or mechanism-wide supremum; OOF and MMD-TV remain
uncertified. Appendix~\ref{app:robustness} reports additional controls. With no
privacy ground truth, the face case cannot calibrate or rank the methods.

\begin{table}[t]
\caption{Results from unused test data for five selected identity pairs per
condition. Under the iid row model, all 40 bounds hold together with at least
95\% confidence. The saved generation records cannot verify that model; see
Appendix~\ref{app:limitations}. ``Pos.'' and ``$>1$'' count bounds above zero
and above one.}
\label{tab:heldout-cert}
\centering
\scriptsize
\setlength{\tabcolsep}{2.5pt}
\begin{tabular}{lccccc}
\toprule
Condition & $n$/class & AUC & Pos. & $>1$ & $\varepsilon_{\mathrm{LB}}(0)$ \\
\midrule
AF/VGG/FF & $30$ & $1.000$--$1.000$ & $5/5$ & $5/5$ & $1.192$--$1.277$ \\
AF/VGG/ID & $29$--$30$ & $0.986$--$1.000$ & $5/5$ & $5/5$ & $1.151$--$1.277$ \\
AF/CB/FF & $9$--$10$ & $1.000$--$1.000$ & $0/5$ & $0/5$ & $0$ \\
AF/CB/ID & $9$ & $1.000$--$1.000$ & $0/5$ & $0/5$ & $0$ \\
\midrule
FN/VGG/FF & $30$ & $0.997$--$1.000$ & $5/5$ & $5/5$ & $1.192$--$1.277$ \\
FN/VGG/ID & $29$--$30$ & $0.970$--$0.999$ & $5/5$ & $0/5$ & $0.685$--$0.925$ \\
FN/CB/FF & $9$--$10$ & $1.000$--$1.000$ & $0/5$ & $0/5$ & $0$ \\
FN/CB/ID & $9$ & $0.914$--$1.000$ & $0/5$ & $0/5$ & $0$ \\
\bottomrule
\end{tabular}
\end{table}

\section{Experiments with Known \texorpdfstring{$\varepsilon$}{Epsilon}}
\label{sec:known-eps}

\subsection{How We Generate Data with Known Privacy}

\noindent\textbf{Two sets of identity prototypes.}
For each identity $i$ in the finite input domain $\mathcal{D}$, let
$c_i\in\mathbb{R}^{512}$ be a fixed prototype. We construct two data-free sets
of 400 prototypes. In the dispersed set, we generate independent standard
Gaussian vectors with a fixed seed and normalize them to unit length. In the
clustered set, we first generate four normalized Gaussian centers. Identity
$i$ is assigned to center $i\bmod4$, and its prototype is that center plus
$0.18$ times an independently generated normalized direction, normalized
again. This second set lets us check whether nearby groups of identities make
the audit harder. Both sets and their random seeds are fixed before any data
are generated, and their array hashes are recorded in the artifact.

The exact $(\Delta_1,\Delta_2)$ values are $(28.302,1.543)$ for the dispersed
set and $(26.367,1.451)$ for the clustered set. We compute them by checking
every pair:
\begin{equation}
\Delta_p=\max_{i\ne j}\lVert c_i-c_j\rVert_p,\qquad p\in\{1,2\},
\label{eq:exact-sensitivity}
\end{equation}
The prototypes that attain each maximum are also recorded. Thus the
sensitivity is known exactly rather than bounded by the looser diameter
$2R$.

\noindent\textbf{Adding Laplace noise.}
The coordinatewise Laplace mechanism is
\begin{equation}
\mathcal{M}_{P}(i)=c_i+\eta,\qquad
\eta_\ell\stackrel{\mathrm{iid}}{\sim}\mathrm{Laplace}(0,b).
\end{equation}
For $\varepsilon_g>0$, setting $b=\Delta_1/\varepsilon_g$ gives global pure
$\varepsilon_g$-DP on the finite identity domain. The exact reference for pair
$(i,j)$ is
\begin{equation}
\varepsilon_{ij}^{P}=\lVert c_i-c_j\rVert_1/b.
\label{eq:laplace-pair-eps}
\end{equation}

\noindent\textbf{Adding Gaussian noise.}
The Gaussian mechanism is
\begin{equation}
\mathcal{M}_{G}(i)=c_i+\xi,\qquad
\xi\sim\mathcal{N}(0,\sigma^2I).
\end{equation}
For each target $\varepsilon_g$, we invert the analytic Gaussian privacy
profile~\cite{balle2018gaussian} using $\Delta_2$ and
$\delta=10^{-5}$. Replacing $\Delta_2$ by
$\lVert c_i-c_j\rVert_2$ in the same profile gives the exact pair-specific
reference $\varepsilon_{ij}^{G}$ at that $\delta$.
No post-noise normalization is applied; the audits observe the additive
mechanism outputs directly. The privacy references are exact for the ideal
real-valued mechanisms defined above. The numerical implementation generates
finite-precision Monte Carlo samples and is not claimed to provide bit-level
differential privacy.

\noindent\textbf{A case with no identity information.}
For target $\varepsilon_g=0$, every identity maps to the same prototype before
the common noise law is applied. Thus all identity-conditioned output laws are
identical and both global and pairwise $\varepsilon$ equal zero exactly.

\subsection{Experimental Design}

We evaluate the target global grid
$\{0,\allowbreak 0.1,\allowbreak 0.25,\allowbreak 0.5,\allowbreak 1,\allowbreak 2,\allowbreak 4,\allowbreak 8,\allowbreak 16,\allowbreak 32,\allowbreak 64,\allowbreak 128,\allowbreak 256\}$, sample sizes
$n\in\{30,100,300\}$ per identity, and three independent repetitions. Forty
identity pairs include the L1- and L2-sensitivity-attaining pairs and
distance-stratified pairs under both metrics. Every method receives the same
pair list. Across the two prototype sets and the Laplace and isotropic Gaussian
mechanisms, this gives 468 experimental settings and 18,720 pairwise results.

For the Gaussian family, the population-optimal AUC reference is
\begin{equation}
\mathrm{AUC}^{*}_{ij}
=\Phi\!\left(\frac{\lVert c_i-c_j\rVert_2}{\sqrt{2}\sigma}\right).
\label{eq:gaussian-bayes-auc}
\end{equation}
For product Laplace noise, we approximate the best possible population AUC
using the exact likelihood-ratio score and 20,000 Monte Carlo draws per class.
These reference values describe ideal classification performance, not
finite-sample confidence bounds. A measured AUC can fall above or below them
because of sampling variation. Their purpose is to separate information lost
by the privacy mechanism from information missed by our trained classifier.

\subsection{What Happens When \texorpdfstring{$\varepsilon=0$}{Epsilon = 0}?}

The zero-privacy-loss case is the clearest test because all identities have
exactly the same output distribution. As expected, Table~\ref{tab:null} shows
median AUC near $0.5$ and \textsc{MMD-TV} near zero. The other results reveal
finite-sample artifacts. \textsc{GaussMech} and \textsc{KDE-LR} remain large.
Across both prototype sets and both noise mechanisms, the direct out-of-fold
ROC conversion is $+\infty$ for 897 of 1,440 pairwise ROC curves (62.3\%), even
though none contains the exact $(0,1)$ corner. A finite ROC curve can have an
initial region with no observed errors, producing a zero denominator without
implying infinite population $\varepsilon$.

The clipped \textsc{MMD-TV} sample statistic is positive in 339 of 720
pure-DP null results (47.1\%) and reaches 0.057. A positive sample value alone
is therefore neither a rejection of equal distributions nor a confidence
bound above zero.

\begin{table}[t]
\centering
\small
\caption{Results when the true pairwise $\varepsilon$ is zero. For each
setting, we take the median over 40 pairs, or count the infinite ROC
conversions, and then average over both prototype sets, the applicable noise
mechanisms, and three repetitions. MMD-TV is shown only for the pure-DP Laplace
mechanism. ``Div.'' is the mean number of infinite ROC conversions among 40
pairs.}
\label{tab:null}
\begin{tabular}{rrrrrr}
\toprule
$n$ & AUC & Div. & MMD-TV & GaussMech & KDE-LR\\
&&&&& $\alpha=.01$\\
\midrule
30  & .490 & 29.3 & .003 & 28.60 & 1526\\
100 & .499 & 24.7 & .003 & 15.51 & 1210\\
300 & .502 & 20.8 & .001 &  8.96 &  962\\
\bottomrule
\end{tabular}
\end{table}

For an independent-coordinate Gaussian or Laplace null with common
per-coordinate standard deviation $\sigma$, the root second moment of the
distance between two empirical means has leading scale
$\sigma\sqrt{2d/n}$. Substitution into
Equation~\ref{eq:gaussmech} predicts the diagnostic floor
\begin{equation}
\sqrt{2\ln(1.25/\delta)}\sqrt{2d/n}.
\label{eq:gauss-null-floor}
\end{equation}
At $d=512$ and $\delta=10^{-5}$, this approximation predicts the three
\textsc{GaussMech} values in Table~\ref{tab:null}. It does not include the exact
expectation of the norm or the error in the estimated residual scales, but it
shows why the score remains positive: it is measuring error in the two sample
means even when the population means are equal. We also repeat the null
experiment for $d\in\{64,128,256,512\}$ and $n\in\{30,100,300\}$. Every
repetition agrees with Equation~\ref{eq:gauss-null-floor} to within 3.3\%,
supporting the scaling approximation without treating it as an exact formula.

\subsection{Do the Methods Track Known \texorpdfstring{$\varepsilon$}{Epsilon}?}

\begin{figure*}[t]
\centering
\includegraphics[width=\textwidth]{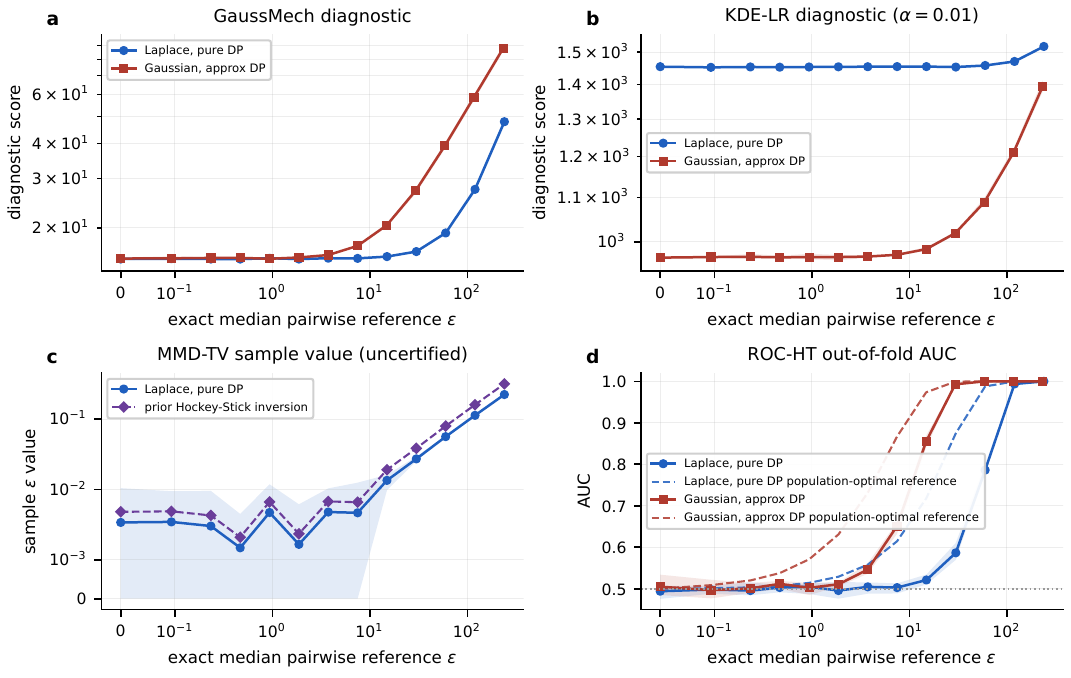}
\caption{Method outputs versus the known median pairwise $\varepsilon$ at
$n=100$. Lines show means; bands include both prototype sets and all
repetitions. The two diagnostic scores remain large when the true value is
zero. MMD-TV and the tighter earlier Hockey-Stick formula are shown only for
the pure-DP mechanism; both use sample estimates rather than confidence bounds.
ROC-HT is compared with the best possible Gaussian AUC and a Monte Carlo
approximation based on 20,000 draws per class for product Laplace noise.}
\Description{Four line-chart panels, one per audit method: GaussMech
diagnostic, KDE-LR diagnostic, MMD-TV sample value, and ROC-HT out-of-fold AUC.
Each plots its output against exact median pairwise epsilon at sample size 100,
with the Gaussian and Laplace mechanisms drawn as separate colored lines within
each panel; the ROC panel also shows population-optimal references.}
\label{fig:regime-response}
\end{figure*}

Figure~\ref{fig:regime-response} shows how each output changes as the known
privacy loss increases. The same pattern appears for both prototype sets:
\begin{itemize}
\item GaussMech is responsive once signal dominates its null floor, but its
value is not calibrated to the exact privacy reference.
\item KDE-LR changes with the noise level but remains dominated by its
high-dimensional finite-sample baseline and smoothing choice.
\item MMD-TV generally increases for the pure-DP mechanism, but at nonzero
settings its median sample value is only about $9.4\times10^{-4}$ of the exact
pairwise $\varepsilon$. Using the tighter earlier Hockey-Stick formula on the
same sample MMD raises the ratio only to about $1.3\times10^{-3}$. These ratios
include all sources of loss in the implemented calculation: the population
inequality, the kernel and bandwidth, and finite-sample estimation. They do not
measure the looseness of either population inequality alone. Among the 3,060
nonzero Laplace results with exact pairwise $\varepsilon\leq1$, the MMD-TV
sample value exceeds the truth 18 times, by as much as $1.95\times$. This shows
directly why the sample value is not a confidence bound.
\item Learned ROC AUC approaches the population-optimal reference for sufficiently
separated Gaussian mechanisms. For product Laplace noise, the gap can remain
large in 512 dimensions, showing that classifier family and sample complexity
matter in addition to $\varepsilon$.
\end{itemize}

\subsection{When Does the Confidence Bound Become Informative?}
\label{sec:known-cert}

For this more expensive validation, we use the ten target values
$\{0,1,2,4,8,16,32,64,128,256\}$ from the broader grid above.
For each predeclared pair, we split each identity's samples equally into
development and test sets. On the development data, we fit scikit-learn
logistic regression with $C=1$, the \texttt{lbfgs} solver, and
\texttt{max\_iter}$=1000$. We also select its orientation and threshold on
those data. We do not
tune this classifier on the benchmark, so the results describe this fixed
classifier rather than the best classifier that might be found. For comparison,
we also use the exact likelihood-ratio score, choosing only its threshold on
development data. This exact score is an oracle reference: it knows the
mechanism's likelihood ratio and is not a deployable auditor. Both fixed tests
are evaluated once on the unused data.

Let $U_{\mathrm{FP}}$ and $U_{\mathrm{FN}}$ be one-sided exact
Clopper--Pearson upper bounds. Within one combination of mechanism, sample
size, target $\varepsilon$, repetition, and prototype set, we test 40 pairs and
two error rates. We therefore use $\alpha=0.05/(40\cdot2)$ for each error rate.
The resulting 40 privacy bounds hold together with at least 95\% confidence,
separately for the learned classifier and exact likelihood-ratio score. We
report
\begin{equation}
\varepsilon_L=\max\!\left\{0,
\ln\frac{1-\delta-U_{\mathrm{FP}}}{U_{\mathrm{FN}}},
\ln\frac{1-\delta-U_{\mathrm{FN}}}{U_{\mathrm{FP}}}\right\},
\label{eq:heldout-lb}
\end{equation}
omitting terms with nonpositive numerators.

Across 14,400 pairwise evaluations, divided equally between the dispersed and
clustered prototype sets, no lower bound from either test exceeds the exact
pairwise value. A failure was numerically possible in 3{,}525 evaluations per
test: 1{,}605 for the dispersed set and 1{,}920 for the clustered set. Among
these cases, the observed failure rate is zero, with a 95\%
Clopper--Pearson upper bound of $8.5\times10^{-4}$. This checks the
implementation; the statistical guarantee comes from evaluating fixed tests
on unused data, not from observing zero failures.

The 95\% guarantee applies separately to each of the 360 experimental
settings, where one setting contains 40 pairs and two error rates. It does not
say that all 360 settings hold together with 95\% confidence. Even if every
setting has exact 95\% confidence, failures could occur somewhere in about 18
of the 360 settings. Section~\ref{sec:experiments} also uses
$\alpha=0.000625$, but for a different collection of tests.

\begin{figure*}[t]
\centering
\includegraphics[width=\textwidth]{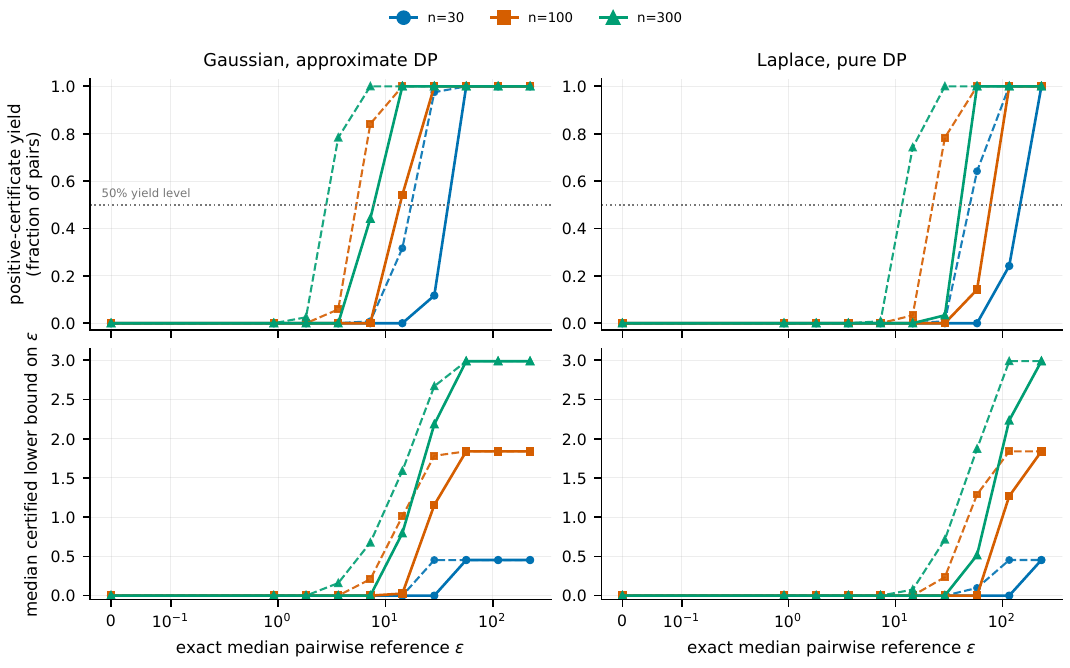}
\caption{How often the test gives a lower bound above zero for the dispersed
prototype set. Solid lines use logistic regression; dashed lines use the exact
likelihood-ratio score with a threshold selected on development data. More
samples increase both the fraction of positive bounds and their largest
possible value.}
\Description{A two-by-two grid of line charts. Columns separate the
Gaussian and Laplace mechanisms. The top row plots the fraction of pairs with
a positive lower bound, and the bottom row plots the median lower bound on
epsilon, both
against the exact median pairwise reference epsilon on a logarithmic axis. The
three sample sizes are distinguished by line color rather than by panel;
learned logistic tests are solid and exact likelihood-ratio tests are dashed.}
\label{fig:certificate-power}
\end{figure*}

Figure~\ref{fig:certificate-power} shows that a valid method need not produce
an informative result. Across $n=30,100,300$, the first tested global
$\varepsilon$ for which the learned classifier gives a positive bound on at
least half the pairs, averaged over three repetitions, is
$\{64,16,16\}$ (Gaussian) and $\{256,128,64\}$ (Laplace) under both
prototype sets. The exact likelihood-ratio score reaches that point at smaller
values. These thresholds apply only to the tested grid and identity pairs.

Because these thresholds average only three repetitions, we perform a larger
power experiment on five pairs from the dispersed prototype set. The pairs are
chosen without using the outcomes: two attain the sensitivity maxima, and
three lie at the 25th, 50th, and 75th distance percentiles. We repeat the full
procedure 200 times per pair, giving 24{,}000 replications; the artifact
includes an exact binomial confidence interval for each setting. The
probability that the learned classifier gives a positive bound rises
$0\to0.629\to1.000$ (Gaussian target $\varepsilon_g=16$) and
$0\to0.213\to1.000$ (Laplace target $64$) as $n$ goes $30\to300$. No
replication's lower bound exceeded its exact reference, but at these large
targets every attainable zero-error bound lies below its reference, so this
check was structurally incapable of failing; it is reported as an
implementation check only, not as evidence for the confidence guarantee.

\section{Discussion}
\label{sec:discussion}

The four columns in Table~\ref{tab:comparative} are not four estimates of the
same $\varepsilon$. \textsc{GaussMech} and \textsc{KDE-LR} depend on modeling
assumptions that do not hold for the full embedding distribution.
\textsc{MMD-TV} begins with a valid population inequality, but the reported
sample value has no one-sided confidence correction. The out-of-fold
\textsc{ROC-HT} value is also descriptive and often becomes infinite at a
finite-sample endpoint. Only the separate classifier evaluated on unused data
produces a lower confidence bound.

This distinction explains why the face-data values cannot be ranked by size.
On the same images, \textsc{KDE-LR} reports values in the thousands,
\textsc{GaussMech} reports values in the tens, \textsc{MMD-TV} remains below
one, and the direct ROC conversion is infinite. The disagreement reflects how
the methods are defined, not four different measurements on a common scale.
The known-$\varepsilon$ experiments provide the meaningful comparison: they
show whether each method is zero at the null, responds to increasing privacy
loss, and supports the claimed statistical interpretation.

The practical choice follows from the question being asked. \textsc{GaussMech}
is inexpensive. It may be useful when additive isotropic Gaussian noise is a
credible model. \textsc{KDE-LR} reveals coordinate-level differences but says
nothing rigorous about the full vector. \textsc{MMD-TV} avoids a classifier and
has a valid population interpretation, but it needs a finite-sample lower bound
before it can support a confidence claim. \textsc{ROC-HT} can use a strong
classifier; when that classifier and its threshold are fixed before test data
are examined, the resulting error bounds provide a valid finite-sample test.
The cost is low power when few test samples are available.

\section{Related Work}
\label{sec:related}

\noindent\textbf{DP auditing.} StatDP searches for
counterexamples~\cite{ding2018statdp}; DP-Sniper and Tight Auditing give
learned, confidence-treated attacks~\cite{bichsel2021dpsniper,nasr2023tight};
DP-Auditorium supplies a neighbor/tester abstraction with finite-sample MMD
machinery~\cite{kong2024dpauditorium}; sequential MMD is
anytime-valid~\cite{gonzalez2025sequential}; Bayesian estimation tightens
attack-based bounds~\cite{zanella2023bayesian}. Further work studies DP-SGD,
group violations, and audit information cost~\cite{jagielski2020auditing,
nasr2021adversary,steinke2023privacy,lokna2023groupattack,xiang2025bits}. We do
not introduce a general auditor, privacy mechanism, or confidence theorem. Our
contribution is the controlled comparison against exact privacy values.
DP-Auditorium~\cite{kong2024dpauditorium} and sequential MMD
~\cite{gonzalez2025sequential} provide valid tests in settings where the true
$\varepsilon$ is unknown. We instead ask whether several reported quantities
behave as expected when the pairwise value is known, including at an exact
null and at three sample sizes. Xiang et al.~\cite{xiang2025bits} bound the
information available to any audit; we measure how far these particular
methods remain from known targets.

\noindent\textbf{Generative privacy.} Extraction and membership inference test
diffusion-model memorization~\cite{carlini2023diffusion,hupang2023diffusion};
DP-SGD protects training data~\cite{abadi2016dpsgd}; and local or feature-space
mechanisms perturb one image or representation~\cite{xue2023dpimage,
wen2021identitydp,shibata2024dpddpm}. Our inference-time channel instead
specializes distribution privacy to identity-conditioned populations
~\cite{chen2023distributionprivacy}; retrospective \textsc{GaussMech} or
\textsc{KDE-LR} inherits no guarantee from these works.

\noindent\textbf{MMD and DP testing.} MMD/IPM--DP relations are established
~\cite{gretton2012kernel,sriperumbudur2010hilbert,kairouz2015composition,
ghazi2024totalvariation,dong2022gaussian}, including finite-sample kernel
auditors~\cite{domingo2022auditing,kong2024dpauditorium,gonzalez2025sequential};
our experiments show why inserting a sample MMD directly into a population
formula does not produce a finite-sample confidence bound.

\section{Limitations and Conclusion}
\label{sec:limitations}

The controlled experiments cover finite prototype sets and four mechanism
families (Appendices~\ref{app:strengthening} and~\ref{app:anisotropic}). Their
curves need not carry over to normalized embeddings or arbitrary generators.
The original face experiments also depend on an iid row assumption that their
saved generation records cannot verify. They do not study a fixed source image
or repeated releases. Appendix~\ref{app:random-cascade} separately studies
changing donors across five FaceFusion passes, but it does not remove the other
limitations listed in Appendix~\ref{app:limitations}.

Within these limits, the known-$\varepsilon$ experiments reveal errors that the
face experiments alone would hide. Two diagnostic scores are large even when
the true value is zero. The direct empirical ROC conversion often becomes
infinite at the same null. The MMD population formula is valid, but its sample
substitution is not a confidence bound. Evaluating a fixed classifier on unused
data avoids these problems and gives a lower confidence bound. Only this
reserved-test route yields a finite-sample certificate, but in most settings
its lower bound is zero and therefore uninformative. Every audit number should
retain its formal-status label and be reported with the assumptions and uncertainty
that give it meaning. Differential-privacy claims should be reserved for
certified results.

\clearpage

\begin{ethics}
This work measures identity leakage and therefore involves sensitive biometric
derivatives and dual-use risk. It recruited no participants, collected no new
face data, and conducted no user study. For the original FaceFusion and
InstantID experiments, the authors generated the saved embedding tables from
CelebA and original VGGFace2 images. The VGGFace2 inputs were used without face
restoration or enhancement, and the generated outputs were encoded with
ArcFace and FaceNet. We do not redistribute those tables. The manifested
fixed-donor control and random-donor cascade were likewise generated locally
from original VGGFace2, with their filtering and generation settings recorded.
The research team determined that ethics review was not required for this
work. Under 45 CFR
46.102(e) it is not human subjects research: we neither interact nor intervene
with any individual and obtain no identifiable private information, as the
corpus ships images under opaque numeric identifiers that we never attempt to
link to names. We obtained VGGFace2 from a public third-party dataset
repository that distributes it under a stated CC BY-NC 4.0 license. That is a
redistribution: the original maintainers have withdrawn their own distribution,
and prior public availability does not remove consent, licensing,
demographic-bias, or misuse concerns. We cannot establish
consent and licensing for the FaceFusion donor used in the original
experiments, so we make no such claim and do not redistribute that asset. The artifact excludes
source and generated faces, identity-level embedding rows, and face-derived
prototypes; it contains synthetic benchmarks, code, pseudonymous pair
identifiers, and aggregates. The embedding tables carry no
demographic labels, and joining them to source-dataset attributes (CelebA's
40 binary attributes; VGGFace2's gender field) would require image-level
provenance the repository does not preserve, so we could not perform the
subgroup analysis we would want. Relatedly, the clustered-prototype result
(Appendix~\ref{app:strengthening}) shows one global guarantee can induce
sharply different pairwise audit difficulty; combined with documented demographic
error differentials in face recognition, audit assurance itself may be
distributed unevenly across groups, and quantifying that disparity is future
work. The intended use is defensive evaluation of privacy claims.
\end{ethics}

\begin{openscience}
Upon acceptance, the authors plan to release an artifact containing the current audit and benchmark source
code, environment specification, unit tests, completed run manifests, shared
pair lists, aggregate result tables, and scripts that regenerate every
benchmark figure and table. The manifests record the source-file and generating
script hashes present at run time. Later integrity-only revisions were checked
against deterministic subsets of the completed runs; the artifact will report
that compatibility check rather than claim that an unarchived earlier script
snapshot is present. Because the artifact does not redistribute biometric
inputs or identity-level embeddings, it will not claim self-contained
end-to-end reproduction of the original FaceFusion or InstantID generation.
It will include a synthetic smoke test, the complete known-$\varepsilon$
outputs, and the fixed-pair power and null-floor sweeps reported here.
\end{openscience}

\begin{ai}
The authors used AI-based tools to assist with editing the manuscript text and
implementing portions of the benchmark and analysis code. AI-based tools were
also used to render the non-data-bearing schematic in
Figure~\ref{fig:pipeline}. The authors reviewed and verified all AI-assisted
content and take full responsibility for the accuracy, originality, and
integrity of this paper and its artifacts.
\end{ai}

\bibliographystyle{ACM-Reference-Format}
\bibliography{references}

\appendix

\section{Detailed Limitations}
\label{app:limitations}

\noindent\textbf{Saved data.}
The case study uses saved FaceFusion and InstantID outputs from selected
VGGFace2 and CelebA identities. The repository lacks exact generator revisions,
prompts, seeds, checkpoint hashes, and the FaceFusion donor identifier, so it
cannot reproduce image generation end to end. FaceFusion fixes one donor and
varies target images; its labels therefore denote target identities. Filtering
removes missing and zero-norm vectors, after which two CelebA FaceFusion
conditions each retain two duplicated generated embeddings: one repeated within
identity 1175, and one identical vector appearing under two different identity
labels (1489 and 1490). The cross-identity duplicate bears directly on the iid
row model and on identity separability.

\noindent\textbf{A later fixed-donor experiment.} The original eight
conditions do not preserve complete generation settings. We therefore applied
the same analysis to a new fixed-donor experiment on original VGGFace2 images,
without face restoration, for which those settings were recorded. It uses its
own identity-pair list and appears as the final row of
Table~\ref{tab:comparative}. Its diagnostic results fall within the ranges of
the original experiments. Bondalakunta et al.\ study leakage and repeated face
swapping across several tools~\cite{faceswapleakage2027}.

\noindent\textbf{Representation.}
The encoders are proxies for identity, not ground truth, and exact checkpoint
identifiers are unavailable. Because InstantID conditions on an ArcFace-family
representation, FaceNet provides the cleaner check against representation-family
overlap. Agreement across two encoders does not show
that they capture every identity cue in the pixels.

\noindent\textbf{Threat model.}
Definition~\ref{def:id-dp} treats identity-conditioned image sampling as part
of a single-query mechanism. It does not cover a fixed user-selected
photograph, repeated releases, a changing FaceFusion donor, or per-image,
prompt, and dataset-level adjacency relations. Those settings require new
mechanism definitions and composition analyses.

\noindent\textbf{Statistical scope.}
The calculations on unused face data assume that the saved rows are iid
samples from each $P_{\mathrm{id}}$. The incomplete generation records cannot
verify this assumption. The reported numbers are therefore binomial confidence
bounds for selected difficult pairs under that model. They are not
unconditional guarantees for the generators, a representative sample of all
pairs, or an estimate of how common leakage is. The smaller CelebA test sets
give no positive bounds after correcting for all 40 pairs. MMD-TV would also
need a one-sided finite-sample or sequential treatment before it could provide
a confidence bound. Finally, the synthetic experiments cover finite prototype
sets and four mechanisms: Laplace, isotropic Gaussian, anisotropic Gaussian,
and $K$-ary randomized response. They do not cover arbitrary generative
distributions.

\section{Additional Known-\texorpdfstring{$\varepsilon$}{Epsilon} Experiments}
\label{app:strengthening}

This appendix adds two privacy mechanisms to the benchmark: randomized
response and an anisotropic Gaussian mechanism. It also compares our test with
DP-Sniper. The completed randomized-response
experiment contains 4{,}320 rows. The DP-Sniper experiment contains 324 rows
and exactly 526{,}569{,}660 mechanism calls. The files were fixed before the
analysis and are checked by their hashes. The 95\% confidence statement for our
test applies separately to each combination of prototype set, requested
$\varepsilon$, sample size, and repetition. Each DP-Sniper bound applies to one
fixed identity pair and one direction.

\noindent\textbf{Randomized response.}
The first additional mechanism does not add continuous noise. It has $K=400$
possible identity labels. Given identity $i$, it reports $i$ with ideal
probability $e^{\varepsilon}/(e^{\varepsilon}+K-1)$; otherwise it chooses one
of the other labels uniformly and releases that label's prototype
~\cite{warner1965randomized,kairouz2016extremal}. The implementation represents
this probability by an integer threshold on a $2^{53}$ grid
~\cite{mironov2012significance,canonne2020discrete}. We therefore compare the
audits with the pure-DP $\varepsilon$ produced by that rounded threshold, not
only with the requested value.

Across 4{,}320 results, neither the learned classifier nor the exact test gives
a lower bound above the known pairwise value. A violation was numerically
possible in 1{,}440 rows for each test; with none observed, the 95\%
Clopper--Pearson upper bound on that rate is $2.1\times10^{-3}$. As before,
this checks the implementation rather than proving the confidence result.

The exact test gives no positive bounds at the null or at requested
$\varepsilon\in\{0.5,2\}$. The fraction becomes 0.75 at
$\varepsilon=8,n=30$, and 1.00 at $\varepsilon=8,n\geq100$ and at
$\varepsilon\in\{16,32\}$. The learned classifier differs by at most 0.005.
The zero result at $\varepsilon=2$ was expected. The total-variation distance
is $(e^2-1)/(e^2+399)\approx0.0157$, so a balanced test needs roughly
$1/\mathrm{TV}^2\approx4\times10^3$ samples per class. Our largest sample size
is 300.

The same finite-sample problems seen earlier also appear here. At every
nonzero setting, the \textsc{MMD-TV} median is far below the realized
$\varepsilon$. At the exact null, 318 of 720 MMD sample values are positive and
60\% of the direct ROC conversions are infinite because of zero denominators.
These values follow their formulas, but they are not evidence of positive or
infinite population privacy loss.

\noindent\textbf{Comparison with DP-Sniper.}
We also adapt DP-Sniper~\cite{bichsel2021dpsniper,zanella2023bayesian} to three
identity pairs selected before the experiment. With $6.5\times10^6$ mechanism
calls per result, it gives valid but conservative lower bounds
(Table~\ref{tab:dpsniper}) and gives no positive result at the exact null. When
restricted to the same $2n$ mechanism calls used by our classifier test, its
bound is zero or negative in most settings. These are per-pair comparisons
under different procedures and computational budgets, not an overall ranking
of the two methods.

\noindent\textbf{Unequal noise across coordinates.} A fourth mechanism changes
the geometry while keeping the noise Gaussian. It is
$c_i+s\Lambda^{1/2}z$ with predeclared diagonal $\Lambda$ (condition number
$100$), $s$ calibrated once at the global whitened sensitivity
$D_\Lambda$. Each pair's exact value follows from its distance after
whitening. Across 4{,}320 results, neither test gives a lower bound above the
known value; a violation was numerically possible in 1{,}944 rows per test.
The exact score, which knows the whitening direction
$\Lambda^{-1}(c_a-c_b)$, is positive for 0.64 of the pairs at
$\varepsilon_g=8,n=100$. The learned classifier is positive in only three of
the eighteen $(\varepsilon_g,n)$ settings: 0.21 at $(16,300)$, 0.02 at
$(32,100)$, and 0.88 at $(32,300)$. Here the main difficulty is learning the
right classification direction, not the confidence correction
(Table~\ref{tab:anisotropic}).

\noindent\textbf{Nearby identities are harder to audit.} In the main
known-$\varepsilon$ experiment, the clustered prototype set contains 10 pairs
within a cluster and 30 pairs from different clusters. They have
median exact references $2.021$ and $15.397$ at $n=100,\varepsilon_g=16$
(Gaussian). Their median learned/exact-test lower bounds are $0/0$ within a
cluster and $0.156/1.082$ between clusters. Thus one global privacy guarantee
can produce identity pairs with very different audit difficulty.

For the exact test, the fraction of positive bounds is identical for the
dispersed and clustered prototype sets at every grid point except requested
$\varepsilon=8,n=30$: 0.758 for the dispersed set and 0.733 for the clustered
set. The geometry does not change the exact privacy value, but it has a small
effect on detection with few samples.

For DP-Sniper, the medians under the 6.5-million-call setting differ only in
the second decimal among the minimum, upper-median, and maximum $L_2$ pairs
($4.483/4.464/4.471$ for randomized response at requested $\varepsilon=8$),
consistent with the family's pair-uniform realized epsilon.

At the same total sample budget as our classifier test, $2n$ mechanism calls
split across DP-Sniper's four sample pools, the
median DP-Sniper lower bound is $-\infty$ or negative in most settings; for
randomized response at requested $\varepsilon=32$ it reaches $0.54/1.96/3.12$
at $n=30/100/300$, still far below that family's realized reference of
$32.0$.

\begin{table}[t]
\centering
\small
\caption{DP-Sniper results using $6.5\times10^6$ mechanism calls per row.
The full experiment has 81 rows at this budget and 243 rows at the matched
$2n$ budget, for 526{,}569{,}660 calls. Entries are median 95\% lower bounds
over three fixed pairs and three repetitions. Ratios compare each bound with
the exact pairwise value, using unrounded numbers. Each result applies to one
fixed pair and direction and to pure DP. None of the 324 bounds exceeds its
exact reference.}
\label{tab:dpsniper}
\begin{tabular}{@{}lcccc@{}}
\toprule
requested $\varepsilon$ & 0.5 & 2 & 8 & 32\\
\midrule
product-Laplace bound & $-0.03$ & $0.12$ & $0.75$ & $2.63$\\
\quad ratio to exact reference & $-0.05$ & $0.06$ & $0.09$ & $0.08$\\
$K$-ary RR bound & $0.12$ & $0.93$ & $4.47$ & $4.60$\\
\quad ratio to realized reference & $0.23$ & $0.46$ & $0.56$ & $0.14$\\
\bottomrule
\end{tabular}
\end{table}

\noindent\textbf{DP-Sniper settings.} We use the published
DD-Search/DP-Sniper implementation with its PyTorch logistic-regression attack.
The optimizer is stochastic gradient descent (SGD), with learning rate $0.3$,
momentum $0.3$, and a \texttt{StepLR} update at step $500$. We use attack-region quantile
$c=0.01$, prediction/training batch sizes $10^{4}/10^{5}$, a single worker
process, input arity one (the fixed pair mapped to inputs $\{0,1\}$), final
evaluation on independent samples with no direction swap, and distinct
per-stage seeds for attack training, orientation, and final evaluation. We do
not use the implementation's input-search component because the identity pair
is fixed before each run. The comparison is therefore limited to certification
for a fixed pair.
The $c=0.01$ attack-region quantile together with the finite final pool caps
the attainable bound well below $32$ regardless of the mechanism, so the
flat randomized-response medians at requested $\varepsilon=32$ reflect the
tool's configured ceiling, not a mechanism property.

\section{Anisotropic-Gaussian Details}
\label{app:anisotropic}

This experiment has 4{,}320 results: two prototype sets, six requested global
$\varepsilon$ values, three sample sizes, three repetitions, and 40 shared
pairs. The completed run is identified by a file hash that the analysis checks
before recomputing the results.

\noindent\textbf{Noise shape.} The diagonal matrix $\Lambda$ determines how
much noise is added to each coordinate. Its entries are
$\Lambda_{kk}\propto C^{1/2-k/(d-1)}$, in descending order, with condition
number $C=100$ and geometric mean one. We choose this deterministic,
data-free matrix before running the experiment and record its hash. After the
coordinates are whitened by $\Lambda^{-1/2}$, the mechanism becomes the usual
isotropic analytic Gaussian mechanism. We therefore calibrate one scale $s$
using the largest whitened distance
$D_\Lambda=\max_{i\neq j}\lVert c_i-c_j\rVert_{\Lambda^{-1}}$
and compute every pair's exact reference from its own whitened distance at
$\delta=10^{-5}$. These distances range from $1.745$ to $2.311$ in the
dispersed prototype set and from $0.348$ to $2.108$ in the clustered set.

No learned or exact-test lower bound exceeds its known pairwise value. Such a
failure was numerically possible for 1{,}944 rows per test. With no failures
observed, the 95\% Clopper--Pearson upper bound on that rate is
$1.5\times10^{-3}$.

At the exact null, neither classifier test gives a positive bound in any of
the 720 rows. In contrast, 340 MMD sample substitutions are positive and 546
direct out-of-fold ROC conversions are infinite because of observed zero
denominators. Thus unequal coordinate variances do not remove the finite-sample
problems found in Section~\ref{sec:known-eps}.

The noise geometry makes the correct classification direction harder to learn.
At $\varepsilon_g=8,n=300$, the exact test gives a positive bound for every
pair in the dispersed set and 0.750 of the pairs in the clustered set, while
the learned classifier gives none. At $\varepsilon_g=32,n=300$, the learned
fractions rise to 1.000 and 0.750. Even when a bound is positive, its median is
at most 0.11 of the exact pairwise $\varepsilon$ for the exact test and 0.022
for the learned classifier.

\begin{table}[t]
\centering
\small
\caption{Fraction of the 40 anisotropic-Gaussian pairs with a lower bound above
zero, averaged over both prototype sets and three repetitions. The exact test
uses the mechanism's whitening direction and serves as an upper reference; the
learned test uses logistic regression fitted on development data.}
\label{tab:anisotropic}
\begin{tabular}{@{}lcccccc@{}}
\toprule
& \multicolumn{6}{c}{requested global $\varepsilon$}\\
\cmidrule(l){2-7}
test, $n$ & 0 & 0.5 & 2 & 8 & 16 & 32\\
\midrule
exact, 30 & 0.00 & 0.00 & 0.00 & 0.01 & 0.29 & 0.85\\
exact, 100 & 0.00 & 0.00 & 0.00 & 0.64 & 0.88 & 0.88\\
exact, 300 & 0.00 & 0.00 & 0.04 & 0.88 & 0.90 & 0.98\\
\midrule
learned, 30 & 0.00 & 0.00 & 0.00 & 0.00 & 0.00 & 0.00\\
learned, 100 & 0.00 & 0.00 & 0.00 & 0.00 & 0.00 & 0.02\\
learned, 300 & 0.00 & 0.00 & 0.00 & 0.00 & 0.21 & 0.88\\
\bottomrule
\end{tabular}
\end{table}

\noindent\textbf{Records needed for a complete face-generator audit.} For each
released image, the experiment should save the generator name and revision,
model checkpoint hashes, prompts and configuration files, sampling seeds, the
donor identifier for face swaps, the encoder version and checkpoint, and a
link from each embedding to the image-generation event. It should also state
whether each generated image uses an independent draw or reuses an earlier
input. The original eight face conditions preserve only the encoder family and
dataset labels from this list.

\section{Method Settings}
\label{app:route-config}

\noindent\textbf{\textsc{GaussMech} configuration.} We set $\delta=10^{-5}$, compute the distance between the two sample mean embeddings, and divide by the larger of their two root-mean-square residual scales as specified in Equation~\ref{eq:gaussmech}.

\noindent\textbf{\textsc{KDE-LR} configuration.} Identity-specific Silverman bandwidths are used to evaluate Gaussian KDEs on 201-point grids. Each grid spans the joint sample range plus three times the larger bandwidth; densities are floored at $10^{-12}$. We report $\alpha\in\{0,10^{-2}\}$.

\noindent\textbf{\textsc{MMD-TV} configuration.} We use an RBF kernel and a
per-pair median-heuristic bandwidth. Equation~\ref{eq:mmd-unbiased} is evaluated
at that data-selected bandwidth; negative values are clipped before taking the
square root. Permutation tests use $10^4$ permutations on the first 100 pairs
per condition with the Phipson--Smyth correction~\cite{phipson2010permutation}.

\noindent\textbf{\textsc{ROC-HT} configuration.} Per identity pair, we train scikit-learn's \texttt{LogisticRegression}~\cite{pedregosa2011scikit} with $C = 1.0$, the \texttt{lbfgs} solver, and \texttt{max\_iter} $= 1000$ on the L2-normalized 512-dimensional embeddings, under $5$-fold \texttt{StratifiedKFold} cross-validation with a fixed random seed. ROC curves are built from the concatenation of out-of-fold predicted probabilities across all $5$ folds.

\section{Repeated Random-Donor Cascade}
\label{app:random-cascade}

Motivated by the five-pass identity-dilution experiment of Bondalakunta et
al.~\cite{faceswapleakage2027}, we study what happens when FaceFusion is
applied five times to the same output. Their data contain one such sequence per
target identity. A statistical audit needs many outputs for each identity, so
we generate 200 separate five-pass sequences for every target. Each pass uses
a newly sampled donor whose distribution is independent of the target. We use
the same development/reserved-test construction as
Section~\ref{sec:known-cert}. This is an additional check on a different
face-generation process, not an independent validation of our method.

\noindent\textbf{How the images are generated.} We use 100 target identities.
For each one, we generate 200 five-pass sequences and divide them before
analysis into 100 development sequences and 100 test sequences. A sequence
starts from an image sampled with replacement from the target's fixed pool of
100 images. It then samples five different donor identities from a fixed pool
of 600 and uses one donor at each pass. The donor-selection rule is the same
for every target identity.

At each pass, the face encoder either returns a normalized 512-dimensional
embedding $\widehat z$ or finds no face. We store either outcome as a
513-dimensional vector: the first as $[\widehat z,0]$ and the second as
$[\mathbf{0}_{512},1]$. Thus the last entry
states whether detection failed; it does not pretend that a failed image has a
face embedding. Image generation completed in every case, but face detection
failed for 19 development outputs and 22 test outputs. Dropping these images
would analyze only successful detections and could hide an identity-dependent
failure pattern. The four descriptive methods use the development data. Pair
selection, classifier fitting, and threshold selection also use only the
development data; each fixed test is then evaluated once on the unused data.

\noindent\textbf{Why the table contains 50 pairs.} The first analysis included
five identity pairs. After inspecting those results, we used development data
from 90 identities whose test samples had not been examined to select 45 more
non-overlapping pairs. We then evaluated their test samples once. Finally, we
recomputed all original and added error bounds together using one Bonferroni
correction over 500 error rates. The individual confidence bounds remain valid,
but the decision to enlarge the analysis was not made in advance. Because the
additional pairs were deliberately chosen using the development data, this
50-pair panel is a stress test, not a random sample from which to estimate how
common the leakage is.

\begin{table}[t]
\centering
\small
\caption{Lower confidence bounds for 50 difficult, non-overlapping identity
pairs selected on development data. They are not an estimate of how common
leakage is. At $\delta=10^{-5}$, the 250 pair-pass bounds use 500 one-sided
error bounds with $\alpha=0.05/500=10^{-4}$. Under the stated iid model, all
results hold together with at least 95\% confidence. ``Censored'' counts pairs
with no errors in either direction. For those pairs, the finite sample size
limits the reported bound to 2.338. Because these are lower bounds and some hit
that limit, they cannot measure the true change in $\varepsilon$ across passes.}
\label{tab:random-cascade-cert}
\begin{tabular}{@{}rccccc@{}}
\toprule
Pass & Median $\varepsilon_{\mathrm{LB}}$ & Range & Positive & $>1$ & Censored\\
\midrule
1 & $2.290$ & $1.683$--$2.338$ & $50/50$ & $50/50$ & $9/50$\\
2 & $2.076$ & $1.367$--$2.338$ & $50/50$ & $50/50$ & $4/50$\\
3 & $1.890$ & $1.085$--$2.338$ & $50/50$ & $50/50$ & $2/50$\\
4 & $1.784$ & $1.025$--$2.338$ & $50/50$ & $50/50$ & $2/50$\\
5 & $1.701$ & $0.671$--$2.312$ & $50/50$ & $46/50$ & $0/50$\\
\bottomrule
\end{tabular}
\end{table}

\begin{table*}[t]
\centering
\small
\caption{Descriptive results on the development data after each FaceFusion
pass. The first five numeric columns are pairwise medians; the last three are
counts. GaussMech and MMD-TV use 500 pairs, KDE-LR uses 100, and the
out-of-fold ROC uses 200. These differ from the 50 pairs in
Table~\ref{tab:random-cascade-cert}. MMD permutation tests use the first 100
pairs at each pass. Every $p$-value is $1/10001$, which remains significant at
$0.05/500$. These descriptive calculations omit rows where face detection
failed. None is a finite-sample confidence bound on $\varepsilon$.}
\label{tab:random-cascade-diag}
\begin{tabular}{@{}rrrrrrrrr@{}}
\toprule
Pass & \textsc{GaussMech} & KDE $\alpha{=}0$ & KDE $\alpha{=}10^{-2}$ &
\textsc{MMD-TV} & OOF AUC & Exact corner & Raw $+\infty$ & MMD reject\\
\midrule
1 & $39.5$ & $5{,}408$ & $1{,}083$ & $0.181$ & $1.000$ & $79/200$ & $200/200$ & $100/100$\\
2 & $34.3$ & $5{,}211$ & $1{,}051$ & $0.153$ & $0.999$ & $43/200$ & $200/200$ & $100/100$\\
3 & $32.1$ & $5{,}166$ & $1{,}039$ & $0.141$ & $0.997$ & $25/200$ & $200/200$ & $100/100$\\
4 & $30.5$ & $5{,}080$ & $1{,}028$ & $0.131$ & $0.994$ & $6/200$ & $200/200$ & $100/100$\\
5 & $29.3$ & $5{,}034$ & $1{,}025$ & $0.124$ & $0.991$ & $4/200$ & $200/200$ & $100/100$\\
\bottomrule
\end{tabular}
\end{table*}

Observed distinguishability weakens with each pass. The number of ROC curves
that reach the exact zero-error corner falls from 79 of 200 after the first
pass to 4 of 200 after the fifth. The median reserved-test lower bound falls
descriptively from 2.290 to 1.701. Nevertheless, every one of the 50 selected
pairs still has a positive bound after the fifth pass. The direct ROC
conversion is less useful: it remains $+\infty$ for all 200 pairs at every
pass. These values are lower bounds, not estimates of the true
$\varepsilon$, and some early values are capped by the finite test size. The
experiment therefore does not establish whether the true $\varepsilon$
decreased, much less the size of any decrease.

\section{Additional Robustness Results}
\label{app:robustness}

\noindent\textbf{Gaussian residuals.}
We pool within-identity residuals across retained identities and inspect Q-Q
plots and Shapiro--Wilk tests~\cite{shapiro1965analysis}. The pooled rows are
not iid draws from one identity, so the tests are diagnostics rather than exact
model-selection tests. Rejection at $p<10^{-3}$ ranges from $296/512$
coordinates in the ArcFace, CelebA, InstantID condition to $512/512$ in the
ArcFace, VGGFace2, FaceFusion condition, and exceeds $300/512$ in seven of
eight conditions.

\begin{figure*}[t]
\centering
\includegraphics[width=\textwidth]{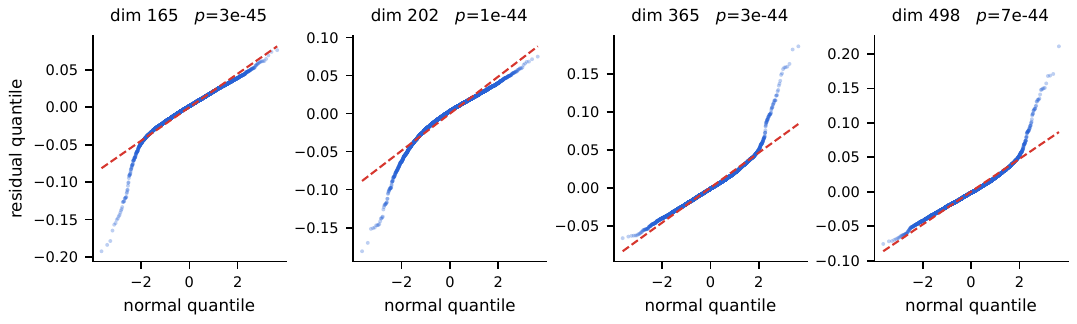}
\caption{Q-Q plots for the four least-Gaussian ArcFace dimensions under
VGGFace2+InstantID. Each panel uses the same 5,000-row subsample drawn from
approximately 40,000 pooled within-identity residuals.}
\Description{Four quantile-quantile plots compare pooled residual coordinates
with Gaussian reference lines and show systematic departures in the selected
least-Gaussian ArcFace dimensions.}
\label{fig:qqplots}
\end{figure*}

\noindent\textbf{KDE smoothing.}
Changing $\alpha$ from $0$ to $10^{-2}$ reduces the summed diagnostic by
$4.8$--$6.4\times$ across conditions. At $\alpha=0$, tail grid points with
near-zero fitted density dominate; the uniform mixture limits those ratios,
but neither setting has a guarantee for the joint release.

\noindent\textbf{MMD bandwidth and classifier comparison.}
For the ArcFace embeddings of VGGFace2 outputs, sweeping
$\sigma/\sigma^\star\in
\{0.5,\allowbreak 0.75,\allowbreak 1,\allowbreak 1.5,\allowbreak 2\}$ gives
maximum audited-pair MMD-TV estimates
$\{0.677,\allowbreak 0.724,\allowbreak 0.642,\allowbreak 0.480,\allowbreak
0.374\}$ for FaceFusion and
$\{0.413,\allowbreak 0.448,\allowbreak 0.406,\allowbreak 0.310,\allowbreak
0.244\}$ for InstantID. Across all conditions, the
mean ratio between the classifier-restricted
$\max_t(\mathrm{TPR}-\mathrm{FPR})$ and $\mathrm{MMD}/2$ is
$4.39$--$7.52$. Neither sample quantity equals population TV, so this does not
isolate the slack of Lemma~\ref{lem:mmdtv}.

\noindent\textbf{Classifier and null controls.}
A 100-tree random forest~\cite{breiman2001random} on identical pairs and folds changes mean AUC by at
most $0.014$ relative to logistic regression. Inter-identity MMD permutation
tests reject $799/800$ pairs in separate uncorrected tests at level $0.05$,
while same-identity split-half tests reject $16/400$ controls under the same
per-pair level; these counts are descriptive rather than family-wise
discoveries. Same-identity ROC controls have chance mean AUC and no exact
$(0,1)$ corner.

\begin{figure*}[t]
\centering
\includegraphics[width=0.92\textwidth]{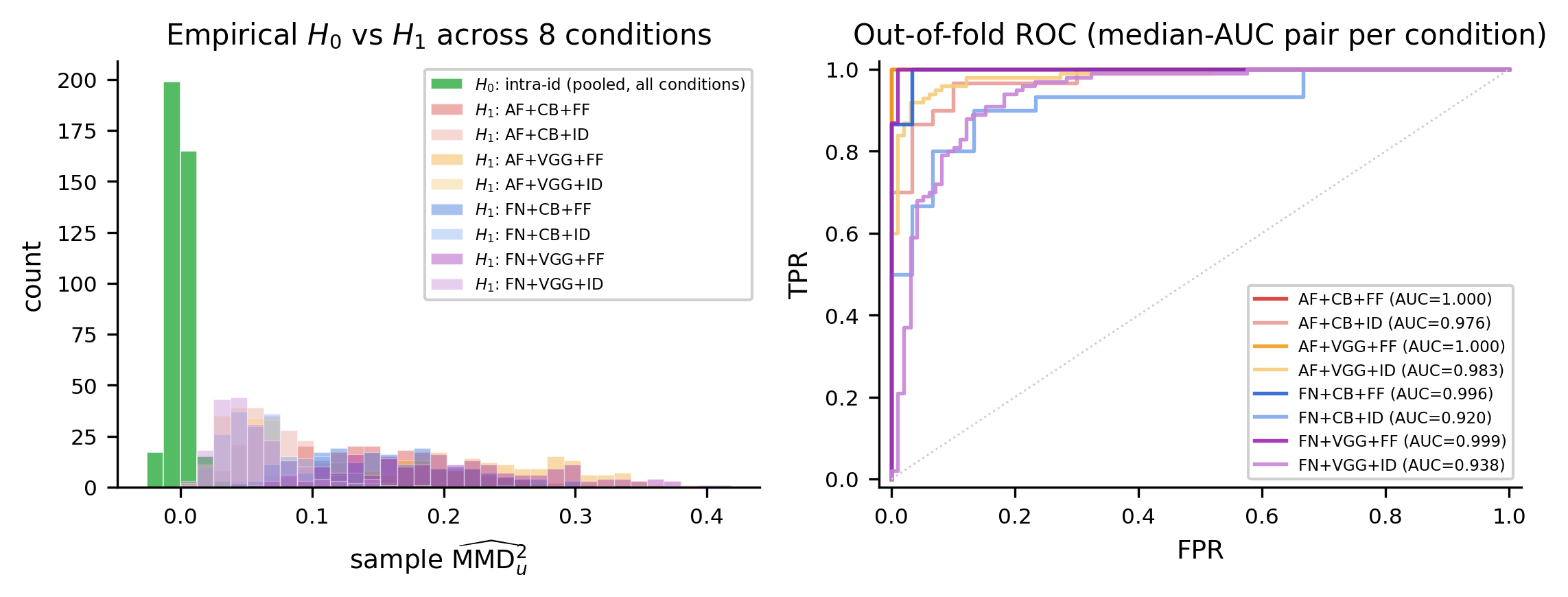}
\caption{\emph{Left}: real same-identity split-half and inter-identity
$\widehat{\mathrm{MMD}}^2_u$ values. \emph{Right}: one representative
median-AUC out-of-fold ROC per condition. Exact corner counts are reported in
the artifact.}
\Description{The left panel compares empirical same-identity and
inter-identity squared MMD distributions. The right panel overlays one
representative out-of-fold ROC curve for each encoder, dataset, and generator
condition.}
\label{fig:mmdroc}
\end{figure*}

\clearpage

\end{document}